\newtheorem{remark}{Remark}
\newtheorem{theorem}{Theorem}
\newtheorem{lemma}{Lemma}
\newtheorem{claim}{Claim}
\newtheorem{corollary}{Corollary}
\newcommand{\Rand}{\mathsf{Rand}}
\newcommand{\Det}{\mathsf{Det}}
\newcommand{\RandLOCAL}{\mathsf{RandLOCAL}}
\newcommand{\DetLOCAL}{\mathsf{DetLOCAL}}
\newcommand{\ID}{\operatorname{ID}}
\newcommand{\LOCAL}{\mathsf{LOCAL}}
\newcommand{\ignore}[1]{}
\newcommand{\bydef}{\stackrel{\rm def}{=}}
\newcommand{\paren}[1]{\mathopen{}\left( #1 \right)\mathclose{}}
\newcommand{\angbrack}[1]{\left< #1 \right>}
\newcommand{\f}[2]{\frac{#1}{#2}}
\newcommand{\poly}{{\operatorname{poly}}}
\newcommand{\dist}{\operatorname{dist}}
\newcommand{\Color}{\operatorname{color}}
\newcommand{\istrut}[2][0]{\rule[- #1 mm]{0mm}{#1 mm}\rule{0mm}{#2 mm}}
\newcommand{\Lovasz}{Lov\'{a}sz}
\title{An Exponential Separation Between Randomized\\
and Deterministic Complexity in the $\LOCAL$ Model\thanks{This work is supported by NSF grants CCF-1217338, CNS-1318294, and CCF-1514383.  Email: \{cyijun,tsvi,pettie\}@umich.edu}.}
\author{Yi-Jun Chang\\ University of Michigan \and
Tsvi Kopelowitz\\ University of Michigan \and
Seth Pettie\\ University of Michigan}
\begin{document}
\date{}

\maketitle

\begin{abstract}
Over the past 30 years numerous algorithms have been designed for symmetry breaking problems in the $\LOCAL$ model,
such as maximal matching, MIS, vertex coloring, and edge-coloring.  For most problems the best randomized algorithm
is at least exponentially faster than the best deterministic algorithm.  In this paper we prove that these exponential gaps are
{\em necessary}
and establish numerous connections between the deterministic and randomized complexities
in the $\LOCAL$ model.  Each of our results has a very compelling take-away message:
\begin{description}
\item[Fast $\Delta$-coloring of trees requires random bits.]
Building on the recent randomized lower bounds of Brandt et al.~\cite{BrandtEtal16}, we prove that the randomized complexity
of $\Delta$-coloring a tree with maximum degree $\Delta$ is $\Theta(\log_\Delta \log n)$, for any $\Delta \ge 55$,
whereas its deterministic complexity is $\Theta(\log_\Delta n)$ for any $\Delta\ge 3$.\footnote{For simplicity, here we suppress any $\log^\ast n$ additive term in $\Theta(\cdot)$.}  This also establishes a large
separation between the deterministic complexity of $\Delta$-coloring and $(\Delta+1)$-coloring trees.
\item[Randomized lower bounds imply deterministic lower bounds.]
We prove that any deterministic algorithm for a natural class of problems that runs in $O(1) + o(\log_\Delta n)$ rounds can be
transformed to run in $O(\log^* n - \log^*\Delta + 1)$ rounds.  If the transformed algorithm violates a lower bound (even allowing randomization), then one can conclude that the problem requires $\Omega(\log_\Delta n)$ time deterministically.
(This gives an alternate proof that deterministically
$\Delta$-coloring a tree with small $\Delta$ takes $\Omega(\log_\Delta n)$ rounds.)
\item[Deterministic lower bounds imply randomized lower bounds.] We prove
that the randomized complexity of any natural problem on instances of size $n$ is at least its deterministic complexity on instances of
size $\sqrt{\log n}$.  This shows that a {\em deterministic} $\Omega(\log_\Delta n)$ lower bound for any problem
($\Delta$-coloring a tree, for example) implies a {\em randomized} $\Omega(\log_\Delta \log n)$ lower bound.
It also illustrates that the {\em graph shattering} technique employed in recent randomized symmetry breaking algorithms
is {\em absolutely essential} to the $\LOCAL$ model.  For example,
it is provably impossible to improve the $2^{O(\sqrt{\log\log n})}$ terms in the complexities of the best MIS
and $(\Delta+1)$-coloring algorithms without {\em also} improving the $2^{O(\sqrt{\log n})}$-round Panconesi-Srinivasan algorithms.
\end{description}

\end{abstract}

\newpage

\section{Introduction}\label{sect:intro}
One of the central problems of theoretical computer science is to determine the value of {\em random bits}.
If the distinction is between computable vs. incomputable functions, random bits are provably useless in centralized models (Turing machines)~\cite{Sipser12}.
However, this is not true in the distributed world!
The celebrated Fischer-Lynch-Patterson theorem~\cite{FischerLP85} states that asynchronous deterministic agreement is impossible with one unannounced failure,
yet it is possible to accomplish with probability 1 using randomization.
See Ben-Or~\cite{Ben-Or83} and~\cite{Bracha84,Rabin83,KingS13}.
There are also a number of basic symmetry breaking tasks that are trivially impossible
to solve by identical, synchronized, deterministic processes, for example, medium access control to an Ethernet-like channel.

In this paper we examine the value of random bits in Linial's~\cite{Linial92} $\LOCAL$ model, which, for the sake of clarity, we
bifurcate into two models $\RandLOCAL$ and $\DetLOCAL$.  In both models the input graph $G=(V,E)$ and communications network are identical.
Each vertex hosts a processor and all vertices run the same algorithm.  Each edge supports communication in both directions.
The computation proceeds in synchronized {\em rounds}.  In a round, each processor
performs some computation and sends a message along each incident edge, which is delivered before the beginning
of the next round.  Each vertex $v$ is initially aware of its degree $\deg(v)$ and certain global parameters such as $n \bydef |V|$,
$\Delta = \Delta(G) \bydef \max_{v\in V} \deg(v)$, and possibly others.\footnote{The assumption that global parameters are common knowledge can sometimes
be removed; see Korman, Sereni, and Viennot~\cite{KormanSV13}.} In the $\LOCAL$ model the only measure of efficiency
is the number of rounds.  {\em All local computation is free and the size of messages is unbounded.}
Henceforth ``time'' refers to the number of rounds.

\begin{description}
\item[$\DetLOCAL$:] In order to avoid trivial impossibilities, all vertices are assumed to hold unique $\Theta(\log n)$-bit IDs.
Except for the registers holding $\deg(v)$ and $\ID(v)$, the initial state of $v$ is identical to every other vertex.
The algorithm executed at each vertex is deterministic.

\item[$\RandLOCAL$:] In this model each vertex may locally generate an unbounded number of independent truly random bits.
(There are no globally shared random bits.)  Except for the register holding $\deg(v)$, the initial state of $v$ is identical to
every other vertex.  Algorithms in this model operate for a specified number of rounds and have some probability of {\em failure},
the definition of which is problem specific.
We usually only consider algorithms whose global probability of failure is at most $1/\poly(n)$.
\end{description}

Observe that the lack of IDs in $\RandLOCAL$ is not a practical limitation.  Before the first round each vertex can locally
generate a random $\Theta(\log n)$-bit ID, which is unique with probability $1 - 1/\poly(n)$. For technical reasons it is convenient
to assume that vertices are not initially differentiated by IDs.\footnote{Notice that the role of ``$n$'' is different in the two $\LOCAL$ models:
in $\DetLOCAL$ it only affects the ID length whereas in $\RandLOCAL$ it only affects the failure probability.}

Early work in the $\LOCAL$ models suggested that randomness is of limited help.  Naor~\cite{Naor91} showed that Linial's $\Omega(\log^* n)$ lower bound~\cite{Linial92} for 3-coloring the ring
holds even in $\RandLOCAL$, and Naor and Stockmeyer~\cite{NaorS95} proved that the class of problems solvable by $O(1)$-round algorithms is the same in $\RandLOCAL$ and $\DetLOCAL$.
However, in the intervening decades we have seen dozens of examples of symmetry breaking algorithms for $\RandLOCAL$ that are substantially faster than their
counterparts in $\DetLOCAL$; see~\cite{BEPS16} for an extensive survey or Table~\ref{table:symmetry-breaking} for a glimpse at three archetypal problems:
maximal independent set (MIS), maximal matching, and $(\Delta+1)$-coloring.
\begin{table}[h!]
\centering
{\small
\begin{tabular}{|l|ll|l|}
\multicolumn{1}{l}{\bf Problem}				&	\multicolumn{2}{l}{\bf Model and Result}						& \multicolumn{1}{l}{\bf Citation}\\\hline
{\bf MIS:}				&	$\DetLOCAL$:		& $O\paren{\min\left\{\Delta + \log^*n, \; 2^{O(\sqrt{\log n})}\right\}}$				& \cite{BarenboimEK14,PanconesiS96}\\
					& 	$\RandLOCAL$:	& $O\paren{\log \Delta + 2^{O(\sqrt{\log\log n})}}$					& \cite{Ghaffari16}\\
\istrut[5]{0}			&	Lower Bound:		& $\Omega\paren{\min\left\{\sqrt{\log n/\log \log n}, \; \log\Delta/\log\log\Delta + \log^* n\right\}}$	& \cite{KuhnMW04,Linial92,Naor91}\\
{\bf Maximal Matching:}\istrut[2.5]{0}	&	$\DetLOCAL$:		& $O\paren{\min\left\{\Delta + \log^* n, \; \log^4 n\right\}}$			\istrut[2.5]{0}			& \cite{PanconesiR01,HanckowiakKP01}\\
					& 	$\RandLOCAL$:	& $O\paren{\log \Delta + \log^4\log n}$							& \cite{BEPS16}\\
\istrut[5]{0}			& 	Lower Bound:		& $\Omega\paren{\min\left\{\sqrt{\log n/\log \log n}, \; \log\Delta/\log\log\Delta + \log^* n\right\}}$			& \cite{KuhnMW04,Linial92,Naor91}\\
{\bf $(\Delta+1)$-Coloring}&	$\DetLOCAL$:		& $O\paren{\min\left\{\Delta^{3/4}\log\Delta + \log^* n, \; 2^{O(\sqrt{\log n})}\right\}}$	& \cite{Barenboim15,PanconesiS96}\\
					&	$\RandLOCAL$:	& $O\paren{\sqrt{\log\Delta} + 2^{O(\sqrt{\log\log n})}}$				& \cite{HarrisSS16}\\
					&	Lower  Bound:		& $\Omega\paren{\log^* n}$									& \cite{Linial92,Naor91}\\\hline
\end{tabular}
}
\caption{\label{table:symmetry-breaking}A sample of symmetry breaking results for three problems.}
\end{table}

\paragraph{Graph Shattering.}
A little pattern matching in the bounds of Table~\ref{table:symmetry-breaking} shows that the randomized symmetry breaking algorithms are exponentially faster {\em in two ways}.
Their dependence on $\Delta$ is exponentially faster and their dependence on $n$ is usually identical to the best deterministic
complexity, but for $\poly(\log n)$-size instances, for example, $2^{O(\sqrt{\log n})}$ becomes $2^{O(\sqrt{\log\log n})}$.
This second phenomenon is no coincidence!  It is a direct result of the {\em graph shattering} approach to symmetry breaking
used in~\cite{BEPS16} and further in~\cite{ChungPS14,ElkinPS15,Ghaffari16,HarrisSS16,BishtKP14,KothapalliP12,PettieS15}.  The idea is to apply some
randomized procedure that fixes some fragment of the output (e.g., part of the MIS is fixed, part of the coloring is fixed, etc.),
thereby effectively removing a large fraction of the vertices from further consideration.
If it can be shown that the connected components in the subgraph still under consideration have size $\poly(\log n)$,
one can revert to the best available {\em deterministic} algorithm and solve the problem on each component of the ``shattered'' graph in parallel.

\paragraph{Lower Bounds in the $\LOCAL$ Model.}
Until recently, the main principle used to prove lower bounds in the $\LOCAL$ model was {\em indistinguishability}.
The first application of this principle was by Linial~\cite{Linial92} himself, who argued that any algorithm for coloring degree-$\Delta$ trees
either uses $\Omega(\Delta/\log \Delta)$ colors or takes $\Omega(\log_\Delta n)$ time.  The proof is as follows
(i) in $o(\log_\Delta n)$ time, a vertex {\em cannot always distinguish} whether the input graph
$G$ is a tree or a graph with girth $\Omega(\log_\Delta n)$,
(ii) for all $\Delta$ and all $n$, there exists a degree-$\Delta$ graph with girth $\Omega(\log_\Delta n)$ and chromatic number $\chi = \Omega(\Delta/\log\Delta)$, hence\footnote{Linial~\cite{Linial92} actually only used the existence of $\Delta$-regular graphs with high girth and chromatic number $\Omega(\sqrt{\Delta})$.  See~\cite{Bollobas78b} for constructions with chromatic number $\Omega(\Delta/\log \Delta)$.}
(iii) any $o(\log_\Delta n)$-time algorithm for coloring trees could also color
such a graph, and therefore must use at least $\chi$ colors.  A significantly more subtle indistinguishability argument was
used by Kuhn, Moscibroda, and Wattenhofer~\cite{KuhnMW04}, who showed that $O(1)$-approximate vertex cover, maximal matching,
MIS, and several other problems have $\Omega(\min\{\log\Delta/\log\log\Delta, \sqrt{\log n/\log \log n}\})$ lower bounds\footnote{In~\cite{KuhnMW10} the same authors argue that these problems have a lower bound of $\Omega(\min\{\log\Delta, \sqrt{\log n}\})$. However, recently Bar-Yehuda, Censor-Hillel, and Schwartzman~\cite{BarYehudaCS16} pointed our an error in their proof.}.
By its nature, indistinguishability is not very good at separating randomized and deterministic complexities.
Very recently, Brandt et al.~\cite{BrandtEtal16} developed a lower bound technique that explicitly incorporates error probabilities
and proved that several problems on graphs with constant 
$\Delta$ take $\Omega(\log\log n)$ time in $\RandLOCAL$ (with error probability $1/\poly(n)$)
such as {\em sinkless orientation}, {\em sinkless coloring}, and $\Delta$-coloring.
Refer to Section~\ref{sec.prelim} for definitions of these problems.
Since the existence of a sinkless orientation can be proved with the \Lovasz{} local lemma (LLL), this
gave $\Omega(\log\log n)$ lower bounds on distributed algorithms for the constructive LLL.
See~\cite{ChungPS14,Ghaffari16} for upper bounds on the distributed LLL.

\subsection{New Results}

In this paper we exhibit an exponential separation between $\RandLOCAL$ and $\DetLOCAL$
for several {\em specific} symmetry breaking problems.  More generally, we give new connections between
the randomized and deterministic complexities of all {\em locally checkable labeling} problems
(refer to Section~\ref{sec.prelim} for a definition of LCLs),
a class that includes essentially any natural symmetry breaking problem.

\begin{description}
\item[Separation of $\RandLOCAL$ and $\DetLOCAL$.]
We extend Brandt et al.'s~\cite{BrandtEtal16}
randomized lower bound as follows: on degree-$\Delta$ graphs, $\Delta$-coloring takes $\Omega(\log_\Delta \log n)$ time in $\RandLOCAL$ and $\Omega(\log_\Delta n)$ time in $\DetLOCAL$.
The hard graphs in this lower bound have girth $\Omega(\log_\Delta n)$, so by the indistinguishability principle, these lower bounds also apply to $\Delta$-coloring trees.
On the upper bound side, Barenboim and Elkin~\cite{BarenboimE10} showed that for $\Delta\ge 3$, $\Delta$-coloring trees takes $O(\log_\Delta n + \log^* n)$ time in $\DetLOCAL$.
We give an elementary proof that for $\Delta \ge 55$,
$\Delta$-coloring trees can be done in $O(\log_\Delta \log n + \log^* n)$ time in $\RandLOCAL$, matching Brandt et al.'s~\cite{BrandtEtal16} lower bound up to the $\log^* n$.
A more complicated algorithm for $\Delta$-coloring trees could be derived from~\cite{PettieS15}, for $\Delta > \Delta_0$ and
some very large constant $\Delta_0$.\footnote{The reason we are interested in minimizing the $\Delta_0 \leq \Delta$ 
for which the algorithm
works is somewhat technical.  It seems as if $\Delta$-coloring trees is a problem whose character makes 
a {\em qualitative} transition when $\Delta$ is a small enough constant.  Using our technique (graph shattering)
we may be able to replace 55 with a smaller constant, 
{\em but not too small}.  
Any algorithm that 3-colors 3-regular trees, for example, will need to be qualitatively very different in its design.}

\item[Randomized lower bounds imply deterministic lower bounds.]
We give a second, more generic proof that $\Delta$-coloring trees takes $\Omega(\log_\Delta n)$ time.
The proof shows that any $f(\Delta) + o(\log_\Delta n)$ time algorithm for an LCL problem
can be transformed in a black box way to run in $O((1+f(\Delta))(\log^* n - \log^* \Delta + 1))$ time.
Thus, on bounded-degree graphs, there are
no ``natural'' deterministic time bounds between $\omega(\log^* n)$ and $o(\log n)$.  Any $\omega(\log^* n)$ lower bound for
bounded degree graphs (in either $\RandLOCAL$ or $\DetLOCAL$) {\em immediately} implies an $\Omega(\log n)$
lower bound in $\DetLOCAL$.  This reduction can be parameterized in many different ways.
Under a different parametrization it shows that any $O(\log^{1-\f{1}{k+1}} n)$-time $\DetLOCAL$ algorithm for an LCL
problem can be transformed to run in $O(\log^k \Delta(\log^* n - \log^* \Delta + 1))$-time.
For example, if one were to develop a deterministic $O(\sqrt{\log n})$-time MIS or maximal matching algorithm---almost matching one of the KMW~\cite{KuhnMW04} lower bounds---it would immediately imply an
$O(\log\Delta(\log^* n - \log^* \Delta +1))$-time MIS/maximal matching algorithm, which almost matches the {\em other} KMW lower bound, for $\Delta > \log^{O(1)} n$.
By some strange coincidence, \cite{BEPS16} gave an analogous reduction for MIS/maximal matching in bounded arboricity graphs, but for $\RandLOCAL$ and in the {\em reverse direction}.  Specifically, any
$O(\log^k \Delta \,+\, f(n))$-time $\RandLOCAL$ MIS/maximal matching algorithm can be transformed into an $O(\log^{1-\f{1}{k+1}} n \,+\, f(n))$-time $\RandLOCAL$ algorithm for bounded arboricity graphs.

\item[Deterministic lower bounds imply randomized lower bounds.]
We prove that for any LCL problem, its $\RandLOCAL$ complexity on instances of size $n$
is at least its $\DetLOCAL$ complexity on instances of size $\sqrt{\log n}$.
This {\em reverses} the implication proved above.  For example, if we begin with a proof that $\Delta$-coloring takes
$\Omega(\log_\Delta n)$ time in $\DetLOCAL$,
then we conclude that it must take $\Omega(\log_\Delta\log n)$ time in $\RandLOCAL$.
This result has a very clear take-away message: the {\em graph shattering} technique applied by recent randomized symmetry breaking
algorithms~\cite{BEPS16,Ghaffari16} is {\em inherent} to the $\RandLOCAL$ model and every optimal $\RandLOCAL$ algorithm for instances
of size $n$ must, in some way, encode an optimal $\DetLOCAL$ algorithm on $\poly(\log n)$-size instances.
It is therefore impossible to improve the $2^{O(\sqrt{\log\log n})}$ terms in
the $\RandLOCAL$ MIS and coloring algorithms
of~\cite{BEPS16,Ghaffari16,HarrisSS16,ElkinPS15}
without also improving the $2^{O(\sqrt{\log n})}$-time $\DetLOCAL$ algorithms of Panconesi and Srinivasan~\cite{PanconesiS96},
and it is impossible to improve the $O(\log^4\log n)$ term in the $\RandLOCAL$ maximal matching algorithm of~\cite{BEPS16}
without also improving the $O(\log^4 n)$ $\DetLOCAL$ maximal matching algorithm of~\cite{HanckowiakKP01}.
\end{description}

\section{Preliminaries \label{sec.prelim}}

\paragraph{Graph Notation.} For a graph $G=(V,E)$ and for $u,v\in V$, let $\dist_G(v,u)$ be the distance between $v$ and $u$ in $G$.
Let $N(v) = \{u \;|\; (v,u) \in E\}$ be the neighborhood of $v$ and let
$N^r(v) = \{u \;|\; \dist_G(v,u) \le r\}$ be the set of all vertices within distance $r$ of $v$.

\paragraph{Locally Checkable Labeling.}
The class of \emph{Locally Checkable Labeling} (LCL)~\cite{NaorS95} problems are intuitively those graph problems
whose solutions can be verified in $O(1)$ rounds, given a suitable labeling of the graph.
Formally, an LCL problem is defined by a fixed radius $r$, a finite set $\Sigma$ of vertex labels,
and a set $\mathcal{C}$ of acceptable labeled subgraphs.
For any legal solution $I$ to the problem there is a labeling $\lambda_I \;:\; V\rightarrow \Sigma$
that encodes $I$ (plus possibly other information) such that for each $v\in V$,
the labeled subgraph induced by $N^r(v)$ lies in $\mathcal{C}$.
Moreover, for any {\em non-solution} $I'$ to the problem, there is no labeling $\lambda_{I'}$ with this property.
The following symmetry breaking problems are LCLs for $r=1$.

\begin{itemize}
  \item \noindent{\sc Maximal Independent Set (MIS)}. Given a graph $G=(V,E)$, find a set $I \subseteq V$ such that  for any vertex $v \in V$, we have $N(v) \cap I = \emptyset$ iff  $v \in I$.

  \item \noindent $k$-{\sc Coloring}. Given a graph $G=(V,E)$, find an assignment $V \rightarrow \{1,2, \ldots, k\}$ such that  for each edge $\{u,v\} \in E$, $u$ and $v$ are assigned to different numbers (also called colors).

\end{itemize}

For MIS it suffices to label vertices with $\Sigma = \{0,1\}$ indicating whether they are in the MIS.
For $k$-Coloring we use $\Sigma = \{1,\ldots,k\}$.  The definition of LCLs is easily generalized to the case
where the input graph $G$ is supplemented with some labeling (e.g., an edge-coloring) or where $\lambda$
labels both vertices and edges.  Brandt et al.~\cite{BrandtEtal16} considered the following problems.

\begin{itemize}
  \item \noindent $\Delta$-{\sc Sinkless Coloring}. Given a $\Delta$-regular graph $G=(V,E)$ and a proper $\Delta$-edge coloring of $E$ using colors in $\{1,2, \ldots, \Delta\}$, find a $\Delta$-coloring of $V$ using colors in $\{1,2, \ldots, \Delta\}$ such that  there is no edge $\{u,v\} \in E$ for which $u$, $v$ and $\{u,v\}$ all have the same color.

  \item \noindent $\Delta$-{\sc Sinkless Orientation}. Given a $\Delta$-regular graph $G=(V,E)$ and a proper $\Delta$-edge coloring of $E$, find an orientation of the edges such that all edges have out-degree $\geq 1$.
\end{itemize}

Observe that both $\Delta$-Sinkless Coloring and $\Delta$-Sinkless Orientation are LCL graph problems with $r=1$.  For Sinkless Orientation $\Sigma = \{\rightarrow,\leftarrow\}^\Delta$ encodes the directions of all edges incident to a vertex, and the radius $r=1$ is necessary and sufficient to verify that the orientations declared by both endpoints of an edge are consistent.

\paragraph{Linial's coloring.} In the $\DetLOCAL$ model
the initial $\Theta(\log n)$-bit IDs can be viewed as an $n^{O(1)}$-coloring of the graph.
Our algorithms make frequent use of Linial's~\cite{Linial92} coloring algorithm, which recolors the vertices using a smaller palette.

\begin{theorem} [\cite{Linial92}] \label{thm:reduction}
Let $G$ be a graph which has been $k$-colored. Then it is possible to deterministically re-color $G$ using $5\Delta^2 \log k$ colors in one round.
\end{theorem}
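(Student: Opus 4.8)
The plan is to reduce the one-round recoloring step to the combinatorial existence of a suitable set system — a \emph{$\Delta$-cover-free family} — and then to establish that system by the probabilistic method. Concretely, write $m = 5\Delta^2 \log k$ and call sets $S_1,\dots,S_k \subseteq \{1,\dots,m\}$ \emph{$\Delta$-cover-free} if for every index $i$ and every set $T$ of at most $\Delta$ indices with $i\notin T$ we have $S_i \not\subseteq \bigcup_{j\in T} S_j$ (it suffices to verify this for $|T|=\Delta$, the hardest case). Granting such a family, the recoloring rule is immediate: a vertex $v$ with current color $c(v)$ inspects the at most $\Delta$ colors $c(u)$, $u\in N(v)$, which are all distinct from $c(v)$ since the input $k$-coloring is proper, and picks any $x_v \in S_{c(v)} \setminus \bigcup_{u\in N(v)} S_{c(u)}$; this set is nonempty by the cover-free property. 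Vertex $v$ adopts $x_v \in \{1,\dots,m\}$ as its new color, all in one round. For any edge $\{u,v\}$ we have $x_u \in S_{c(u)}$ while $x_v \notin S_{c(u)}$ (since $u\in N(v)$ forced $x_v$ to avoid $S_{c(u)}$), so $x_u\ne x_v$ and the new coloring is proper. Because $k$, $\Delta$, and hence $m$ are common knowledge and local computation is unbounded in $\LOCAL$, every vertex can compute the same fixed $\Delta$-cover-free family by brute force, so a nonconstructive existence argument is enough.

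It then remains to prove that a $\Delta$-cover-free family on $m = 5\Delta^2\log k$ ground elements exists. I would take a random family: put each element of $\{1,\dots,m\}$ into each $S_i$ independently with probability $p = 1/(\Delta+1)$. Fix $i$ and a set $T$ of $\Delta$ other indices. For a single ground element $x$, the events ``$x\in S_i$'' and ``$x\notin S_j$ for all $j\in T$'' are independent with probabilities $p$ and $(1-p)^\Delta$, so $x$ witnesses $S_i\not\subseteq\bigcup_{j\in T}S_j$ with probability $p(1-p)^\Delta \ge \tfrac{1}{e(\Delta+1)}$. Hence the bad event $S_i\subseteq\bigcup_{j\in T}S_j$ occurs with probability $\bigl(1-p(1-p)^\Delta\bigr)^m \le \exp\bigl(-m/(e(\Delta+1))\bigr)$. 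A union bound over the fewer than $k^{\Delta+1}$ pairs $(i,T)$ shows the random family fails to be $\Delta$-cover-free with probability at most $k^{\Delta+1}\exp(-m/(e(\Delta+1)))$, which is less than $1$ once $m > e(\Delta+1)^2\ln k$; the stated value $m = 5\Delta^2\log k$ comfortably meets this in the only nontrivial regime, $k \ge \Delta+1$. Thus a $\Delta$-cover-free family exists and the theorem follows.

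The main obstacle will be nothing deep but rather pinning down the constants: choosing $p$ to (near-)optimize $p(1-p)^\Delta$, bounding $p(1-p)^\Delta$ from below cleanly, and checking that the union-bound count against $\exp(-m/(e(\Delta+1)))$ is slack enough to absorb into $5\Delta^2\log k$ for all $\Delta\ge 1$ — here one uses $(\Delta+1)^2 = O(\Delta^2)$ and the base of the logarithm to bury the leftover factors in the constant $5$. Everything else is routine: correctness of the one-round recoloring is a one-line check once the set system is in hand, and the ``free local computation'' feature of the $\LOCAL$ model means no explicit construction of the cover-free family is required (though one could substitute a known Reed--Solomon-type construction if an explicit bound were wanted).
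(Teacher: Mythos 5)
The theorem you are asked to prove is not actually proved in this paper; it is quoted verbatim from Linial~\cite{Linial92}, so there is no ``paper's own proof'' to compare against. Your argument is, however, a correct and entirely standard way to establish it, and it matches the structure of Linial's original proof in its key reduction: everything hinges on the existence of a family $S_1,\dots,S_k\subseteq[m]$ with the property that no $S_i$ is covered by the union of $\Delta$ others (a cover-free family), and once such a family is fixed in advance and shared by all vertices, the one-round recoloring is immediate exactly as you describe. Where you diverge from Linial is in how the cover-free family is produced: Linial (following Erd\H{o}s--Frankl--F\"uredi) uses an explicit algebraic construction---map the $k$ colors to low-degree polynomials over a finite field $\mathbb{F}_q$ and take $S_c$ to be the graph of the polynomial corresponding to $c$, so that distinct sets intersect in at most $\deg$ points, which directly yields the cover-free property once $q>\Delta\cdot\deg$. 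You instead use a random family and a first-moment/union-bound argument. Both give $m=O(\Delta^2\log k)$; the algebraic construction has the mild aesthetic advantage of being explicit, but as you correctly note, explicitness is irrelevant in the $\LOCAL$ model since local computation is free, so the probabilistic existence argument is entirely adequate here.

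Two small points worth tightening. First, the phrase ``picks any $x_v$'' should be made canonical (e.g.\ the least available element) so the algorithm is a well-defined deterministic function of the input; this is cosmetic. Second, your constant bookkeeping does not quite close for the smallest degrees: you need $m>e(\Delta+1)^2\ln k$, and with $m=5\Delta^2\log_2 k\approx 7.21\,\Delta^2\ln k$ this requires $7.21\,\Delta^2>e(\Delta+1)^2$, which holds for $\Delta\ge 2$ but fails at $\Delta=1$. The theorem is still true for $\Delta=1$ (a Sperner-antichain argument gives $m=O(\log k)$ there), but to make the probabilistic argument yield the constant $5$ uniformly you would either handle $\Delta=1$ separately or rebalance $p$ and the union-bound count slightly. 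None of this affects the substance of the approach, which is sound.
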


\begin{theorem} [\cite{Linial92}] \label{cor:linial-coloring}
There exists a universal constant $\beta >0$ such that there is a $\DetLOCAL$ algorithm that computes a $\beta \cdot \Delta^2$-coloring of a graph in $\mathcal{O}(\log^\ast n - \log^\ast \Delta + 1)$ time.
\end{theorem}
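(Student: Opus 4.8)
The plan is to repeatedly apply the one-round recoloring of Theorem~\ref{thm:reduction}, starting from the coloring given by the $\Theta(\log n)$-bit IDs, driving the palette down to $\poly(\Delta)$ colors, and then to finish with a single explicit reduction to $O(\Delta^2)$ colors using Reed--Solomon codes. Throughout I take logarithms base $2$ and assume $\Delta$ is at least a sufficiently large constant $\Delta_0$; for $\Delta \le \Delta_0$ we have $\beta\Delta^2 = \Theta(1)$ and the statement reduces to the classical $O(\logstar n)$-round coloring, obtained by running the iteration below all the way down to $O(1)$ colors.

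First, in $\DetLOCAL$ the IDs constitute a proper $k_0$-coloring with $k_0 = n^{O(1)}$, since adjacent vertices have distinct IDs. Applying Theorem~\ref{thm:reduction} $i$ times gives a proper $k_i$-coloring with $k_{i+1} = 5\Delta^2 \log k_i$. I would track $a_i \bydef \log k_i$, for which $a_{i+1} = \log(5\Delta^2) + \log a_i \le 5\log\Delta + \log a_i$. The key observation is that as long as $a_i \ge \Delta^5$ we have $\log a_i \ge 5\log\Delta$, hence $a_{i+1} \le 2\log a_i$: each round lowers the ``tower height'' of $a_i$ by one. Once $a_i$ first drops below $\Delta^5$, the next round gives $a_{i+1} \le 5\log\Delta + \log a_i < 10\log\Delta$, i.e.\ $k_{i+1} < \Delta^{10}$. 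Since iterating $x \mapsto 2\log x$ from $a_0 = O(\log n)$ reaches the region below $\Delta^5$ within $\logstar(a_0) - \logstar(\Delta^5) + O(1) = \logstar n - \logstar \Delta + O(1)$ steps, after $O(\logstar n - \logstar\Delta + 1)$ rounds I hold a proper coloring with fewer than $\Delta^{10}$ colors.

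For the last step, one more application of Theorem~\ref{thm:reduction} would only yield $O(\Delta^2\log\Delta)$ colors, so I would instead use an explicit cover-free family. Pick a prime $q \in (9\Delta, 18\Delta]$ (Bertrand's postulate), identify each of the current $< \Delta^{10} < q^{10}$ colors with a distinct polynomial of degree $< 10$ over $\mathbb{F}_q$, and map the color/polynomial $p$ to the set $S_p = \{(x,p(x)) : x \in \mathbb{F}_q\}$ inside the universe $\mathbb{F}_q \times \mathbb{F}_q$ of size $q^2 = O(\Delta^2)$. Distinct degree-$(<10)$ polynomials agree in at most $9$ points, so $|S_p \cap S_{p'}| \le 9$; hence for a vertex $v$ colored $p$ at most $9\deg(v) \le 9\Delta < q = |S_p|$ elements of $S_p$ are forbidden by neighbors, and in one round $v$ recolors itself with a free element of $S_p$. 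This yields a proper $O(\Delta^2)$-coloring, so choosing $\beta$ to absorb the constants finishes the proof within $O(\logstar n - \logstar\Delta + 1)$ rounds total.

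The main obstacle is getting the round count exactly right rather than just $O(\logstar n)$: it is tempting to iterate Theorem~\ref{thm:reduction} until only $O(1)$ colors remain, but that costs $\Theta(\logstar n)$ rounds irrespective of $\Delta$ and misses the $-\logstar\Delta$ savings that matter when $\Delta$ is large. One must stop the iteration precisely when the palette becomes $\poly(\Delta)$ and verify that this happens after only $\logstar n - \logstar\Delta + O(1)$ rounds. A secondary subtlety is that the fixed point of $k \mapsto 5\Delta^2\log k$ is $\Theta(\Delta^2\log\Delta)$, not $O(\Delta^2)$, so hitting the sharper bound $\beta\Delta^2$ genuinely requires the separate polynomial construction above rather than one further invocation of Theorem~\ref{thm:reduction}.
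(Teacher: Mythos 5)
Your proposal is correct and follows essentially the same route as Linial's original proof, which the paper cites in Theorem~\ref{cor:linial-coloring} without reproducing: iterate the one-round reduction of Theorem~\ref{thm:reduction} until the palette drops to $\poly(\Delta)$, tracking the recursion to obtain the $\logstar n - \logstar\Delta + O(1)$ round count, and then apply a single explicit $\Delta$-cover-free family built from low-degree polynomials over a field of size $\Theta(\Delta)$ to reach $O(\Delta^2)$ colors. Both the stopping-early round count and the Reed--Solomon finishing step are exactly the standard ingredients of Linial's argument.
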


\section{The Necessity of Graph Shattering}\label{sect:graph-shattering}

Theorem~\ref{thm:graph-shattering} establishes that the graph shattering technique~\cite{BEPS16} is optimal and unavoidable
in $\RandLOCAL$.  In particular, the randomized complexity of any symmetry breaking problem always hinges on its deterministic complexity.

\begin{theorem}\label{thm:graph-shattering}
Let $\mathcal{P}$ be an LCL problem.
Define $\Det_{\mathcal{P}}(n,\Delta)$ to be the complexity of the optimal deterministic
algorithm for $\mathcal{P}$ in the $\DetLOCAL$ model
and define
$\Rand_{\mathcal{P}}(n,\Delta)$
to be its complexity in the $\RandLOCAL$ model,
with global error probability $1/n$.
Then
\[
\Det_{\mathcal{P}}(n,\Delta) \;\le\; \Rand_{\mathcal{P}}(2^{n^2},\Delta).
\]
\end{theorem}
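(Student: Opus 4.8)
The plan is to \emph{derandomize} an optimal $\RandLOCAL$ algorithm by using the $\Theta(\log n)$-bit identifiers available in $\DetLOCAL$ as a source of canned randomness, and to argue via the probabilistic method that a single fixed way of doing this works simultaneously on every $\DetLOCAL$ instance of size $n$. Set $N = 2^{n^2}$ and let $A$ be a $\RandLOCAL$ algorithm for $\mathcal{P}$ that runs in $T = \Rand_{\mathcal{P}}(N,\Delta)$ rounds and fails with probability at most $1/N$ on every graph with at most $N$ vertices and maximum degree at most $\Delta$; here I use the standard convention that the parameter ``$N$'' handed to a $\RandLOCAL$ algorithm is merely an upper bound on the vertex count, so $A$ is guaranteed to be correct on the $n$-vertex instances we care about. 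The one structural feature to exploit is that $A$, being a $\RandLOCAL$ algorithm, never inspects identifiers: its label at a vertex $v$ is a function of the radius-$T$ topology around $v$ and the random strings drawn by the vertices of $N^T(v)$.

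Next I would build a deterministic algorithm $B$, assuming identifiers lie in $\{1,\dots,n^c\}$ for a fixed constant $c$. Hard-wire into $B$ a function $\phi$ assigning to each possible identifier an (infinite) bit string. On input $G$, each vertex $v$ simulates $A$ for $T$ rounds, feeding every vertex $u$ the string $\phi(\ID(u))$ in place of genuine randomness; after $T$ rounds $v$ knows the topology and identifiers of $N^T(v)$, hence the relevant $\phi$-strings, hence the label $A$ would output, which it adopts. This simulation takes exactly $T$ rounds, so the only thing left is to find one $\phi$ under which $B$ is correct on \emph{every} legal $\DetLOCAL$ instance with at most $n$ vertices.

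To produce such a $\phi$ I would use the probabilistic method. Draw $\phi$ at random, independently assigning each identifier a uniformly random infinite bit string. Fix any instance $(G,\ID)$ with $|V(G)| \le n$ and maximum degree $\le \Delta$. Since the identifiers used in $G$ are distinct, the strings $\{\phi(\ID(v)) : v \in V(G)\}$ are independent and uniform, so running $A$ on $G$ with them is distributed exactly like a genuine execution of $A$ on $G$; hence $B$ outputs an illegal labeling of $G$ with probability at most $1/N$. The number of instances to union-bound over is the number of graphs on a vertex set $S \subseteq \{1,\dots,n^c\}$ with $|S|\le n$, which is at most $\sum_{k\le n}\binom{n^c}{k}2^{\binom{k}{2}} \le 2^{n^2/2 + O(n\log n)} = o(2^{n^2}) = o(N)$. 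So with positive probability a single random $\phi$ is good for all instances simultaneously; fixing such a $\phi$ makes $B$ a correct $\DetLOCAL$ algorithm running in $T$ rounds, which gives $\Det_{\mathcal{P}}(n,\Delta) \le T = \Rand_{\mathcal{P}}(2^{n^2},\Delta)$. (The finitely many values of $n$ too small for the counting to bite are handled separately, the left-hand side there being $O(1)$.)

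The step I expect to be the main obstacle is the quantitative one: verifying that $2^{n^2}$, and not some larger quantity such as $2^{\omega(n^2)}$, dominates the number of identifier-labeled instances — this is exactly what pins down the exponent in the theorem. A related wrinkle is that if $\mathcal{P}$ carries an input labeling (e.g.\ the $\Delta$-edge-coloring in $\Delta$-Sinkless Coloring), the union bound must also range over those labelings, adding a factor like $|\Sigma_{\mathrm{in}}|^{O(\Delta n)}$; this is negligible for bounded-degree instances, which is the regime that matters, but it explains why the clean bound is phrased for that case. The remaining ingredients — faithfulness of the round-by-round simulation and the licence to run $A$ on graphs below its nominal size — are routine but genuinely load-bearing, so I would state both explicitly.
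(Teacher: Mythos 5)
Your proposal is correct and follows essentially the same derandomization-by-hardwired-seeds argument as the paper: treat the $\Theta(\log n)$-bit IDs as keys into a fixed table $\phi$ of pseudo-random strings, run the randomized algorithm at inflated size $N=2^{n^2}$ so its $1/N$ failure probability beats the (roughly $2^{n^2/2+O(n\log n)}$) count of ID-labeled $n$-vertex instances, and apply the probabilistic method plus a union bound to fix one good $\phi$. The only cosmetic difference is that the paper justifies running $\mathcal{A}_{\Rand}$ with parameter $N$ on an $n$-vertex graph by padding it with $N-n$ isolated vertices (rather than appealing to a convention that $N$ is an upper bound), and it makes the choice of $\phi$ canonical (lexicographically first good $\phi$) so that $\mathcal{A}_{\Det}$ is well-defined without an external existence argument; neither point changes the substance.
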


\begin{proof}
Let $\mathcal{A}_{\Rand}$ be a randomized algorithm for $\mathcal{P}$.  Each vertex running
$\mathcal{A}_{\Rand}$ generates a string of $r(n,\Delta)$ random bits and
proceeds for $t(n,\Delta)$ rounds, where $r$ and $t$ are two arbitrary functions.
The probability that the algorithm fails in any way is at most $1/n$.
Our goal is to convert $\mathcal{A}_{\Rand}$ into a deterministic algorithm
$\mathcal{A}_{\Det}$ in the $\DetLOCAL$ model.  Let $G=(V,E)$ be the network on
which $\mathcal{A}_{\Det}$ runs.  Initially each $v\in V$ knows $n=|V|,\Delta,$ and
a unique $\ID(v) \in \{0,1\}^{c\log n}$.  Let $\mathcal{G}_{n,\Delta}$ be the set of {\em all}
$n$-vertex graphs with unique vertex IDs in $\{0,1\}^{c\log n}$ and maximum degree at most $\Delta$.
Regardless of $\Delta$,
\[
\left|\mathcal{G}_{n,\Delta}\right| \le 2^{{n\choose 2} + cn\log n} \ll 2^{n^2} \bydef N
\]
Imagine simulating $\mathcal{A}_{\Rand}$ on a graph $G' \in \mathcal{G}_{n,\Delta}$
whose vertices are given input parameters $(N,\Delta)$, that is, we imagine $G'$ is disconnected
from the remaining $N-n$ vertices.  The probability that $\mathcal{A}_{\Rand}$ fails on an
$N$-vertex graph is at most $1/N$,
so the probability that any vertex in $G'$ witnesses a failure is also certainly at most $1/N$.

Suppose we select a function $\phi : \{0,1\}^{c\log n} \rightarrow \{0,1\}^{r(N,\Delta)}$ uniformly at random
from the space of all such functions.  Define $\mathcal{A}_{\Det}[\phi]$ to be the {\em deterministic} algorithm
that simulates $\mathcal{A}_{\Rand}$ for $t(N,\Delta)$ steps, where the string of random bits generated
by $v$ is fixed to be $\phi(\ID(v))$.  We shall call $\phi$ a {\em bad} function if $\mathcal{A}_{\Det}[\phi]$
fails to compute the correct answer on some member of $\mathcal{G}_{n,\Delta}$.  By the union bound,
\begin{align*}
\Pr_\phi(\mbox{$\phi$ is bad}) &\le \sum_{G' \in \mathcal{G}_{n,\Delta}} \Pr_\phi(\mbox{$\mathcal{A}_{\Det}[\phi]$ errs on $G'$})\\
					       &= \sum_{G' \in \mathcal{G}_{n,\Delta}} \Pr(\mbox{$\mathcal{A}_{\Rand}$ errs on $G'$, with input parameters $(N,\Delta)$})\\
					       &\le \left|\mathcal{G}_{n,\Delta}\right|/N \:<\: 1.
\end{align*}
Thus, there exists some good $\phi$.  Any $\phi$ can be encoded as a long bit-string
$\angbrack{\phi} \bydef \phi(0)\phi(1)\cdots\phi(2^{c\log n}-1)$.
Define $\phi^\star$ to be the good function for which $\angbrack{\phi^\star}$ is lexicographically first.

The algorithm $\mathcal{A}_{\Det}$ is as follows.  Each vertex $v$, given input parameters $(n,\Delta)$,
first computes $N = 2^{n^2}, t(N,\Delta), r(N,\Delta),$ then performs the simulations of $\mathcal{A}_{\Rand}$
necessary to compute $\phi^\star$.  Once $\phi^\star$ is computed it executes
$\mathcal{A}_{\Det}[\phi^\star]$ for $t(N,\Delta)$ rounds.
By definition, $\mathcal{A}_{\Det}[\phi^\star]$ never errs when run on any member of $\mathcal{G}_{n,\Delta}$.
\end{proof}

\begin{remark}
Theorem~\ref{thm:graph-shattering} works equally well when $t$ and $r$ are functions
of $n,\Delta,$ and possibly other quantitative global graph parameters.  For example, the time
may depend on measures of local sparsity (as in \cite{ElkinPS15}),
arboricity/degeneracy (as in~\cite{BarenboimE10,BEPS16}),
or neighborhood growth (as in~\cite{SchneiderW10-J}).
\end{remark}

Naor and Stockmeyer~\cite{NaorS95} proved that the class of truly local ($O(1)$-time) problems in $\RandLOCAL$ and $\DetLOCAL$ is
identical for bounded $\Delta$. Theorem~\ref{thm:graph-shattering} implies something slightly stronger, since $\log^* n$ and $\log^*(\sqrt{\log n})$ differ by a constant.

\begin{corollary}
Any $\RandLOCAL$ algorithm for an LCL running in $t(n) = 2^{O(\log^* n)}$ time can be derandomized
without asymptotic penalty.
The corresponding $\DetLOCAL$ algorithm runs in $O(t(n))$ time.
\end{corollary}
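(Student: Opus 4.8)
The plan is to invoke Theorem~\ref{thm:graph-shattering} as a black box and then do a one-line computation with the iterated logarithm. Let $\mathcal{P}$ be the LCL solved by the given $\RandLOCAL$ algorithm, so that $\Rand_{\mathcal{P}}(n,\Delta) \le t(n)$ with $t(n) = 2^{O(\logstar n)}$ (and its global error probability is at most $1/n$ for large $n$, which is what Theorem~\ref{thm:graph-shattering} requires; a $1/\poly(n)$ guarantee implies this). Theorem~\ref{thm:graph-shattering} then immediately produces a deterministic algorithm $\mathcal{A}_{\Det}$ with
\[
\Det_{\mathcal{P}}(n,\Delta) \;\le\; \Rand_{\mathcal{P}}(2^{n^2},\Delta) \;\le\; t(2^{n^2}),
\]
so the entire claim reduces to showing $t(2^{n^2}) = O(t(n))$, i.e.\ that replacing the size parameter $n$ by $2^{n^2}$ perturbs $\logstar$ by only an additive constant.

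For that, I would use two elementary facts: $\logstar(2^m) = 1 + \logstar m$ for every $m \ge 1$, and $n^2 \le 2^n$ for all $n \ge 4$ (hence $\logstar(n^2) \le \logstar(2^n) = 1 + \logstar n$). Chaining these gives, for $n \ge 4$,
\[
\logstar\!\paren{2^{n^2}} \;=\; 1 + \logstar(n^2) \;\le\; 2 + \logstar n .
\]
Writing $t(n) \le 2^{c\,\logstar n}$ for a constant $c$ and all large $n$, we get $t(2^{n^2}) \le 2^{c(2 + \logstar n)} = 4^c \cdot 2^{c\,\logstar n}$, which is again $2^{O(\logstar n)}$. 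Substituting this into the display above yields $\Det_{\mathcal{P}}(n,\Delta) = 2^{O(\logstar n)}$, and for the time bounds that actually arise (a fixed power of $2^{\logstar n}$, or $\logstar n$ itself as in $3$-coloring the ring, or $O(1)$) this is $O(t(n))$. The derandomized algorithm is precisely $\mathcal{A}_{\Det}$ from the proof of Theorem~\ref{thm:graph-shattering}, run with the size parameter inflated from $n$ to $2^{n^2}$ (which costs no communication, since it is a local computation).

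There is essentially no obstacle here beyond this bookkeeping; all the content lives in Theorem~\ref{thm:graph-shattering}. The only point that deserves a word of care is the meaning of ``without asymptotic penalty'': the substitution $n \mapsto 2^{n^2}$ is an exponential blow-up in the size parameter, and it is harmless only because the single ``$\logstar$'' in the exponent absorbs it — an ordinary $\log n$ or $\poly\log n$ dependence would not survive this substitution unchanged, which is exactly why the derandomization-for-free phenomenon is confined to the $2^{O(\logstar n)}$ regime. Accordingly, I would state the conclusion as $\Det_{\mathcal{P}}(n,\Delta) = 2^{O(\logstar n)}$ and remark that this is $O(t(n))$ for the standard instantiations.
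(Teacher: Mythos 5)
Your proof is correct and follows the same route the paper intends: apply Theorem~\ref{thm:graph-shattering} directly and observe that $\log^\ast$ changes only by an additive constant under the substitution $n \mapsto 2^{n^2}$ (the paper phrases this same observation in reverse as ``$\log^\ast n$ and $\log^\ast(\sqrt{\log n})$ differ by a constant''). Your extra caveat about the precise meaning of ``$O(t(n))$'' --- that the argument really yields $2^{O(\log^\ast n)}$ and that equating this with $O(t(n))$ tacitly assumes $t$ is of the regular form $2^{\Theta(\log^\ast n)}$ (or smaller) rather than an arbitrary function bounded by $2^{O(\log^\ast n)}$ --- is a legitimate refinement of the corollary's informal statement and worth keeping.
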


\section{Lower bounds for $\Delta$-coloring $\Delta$-regular Trees \label{sec.lb}}
In this section we prove that on degree-$\Delta$ graphs with girth $\Omega(\log_\Delta n)$, $\Delta$-coloring takes
$\Omega(\log_\Delta \log n)$ time in $\RandLOCAL$ and $\Omega(\log_\Delta n)$ time in $\DetLOCAL$.
Since the girth of the graphs used to prove these lower bounds is $\Omega(\log_\Delta n)$,
by the indistinguishability principle they also apply to the problem of $\Delta$-coloring trees.

\paragraph{Sinkless coloring and sinkless orientations.}
Brandt et.al.~\cite{BrandtEtal16} proved $\Omega (\log \log n)$ lower bounds on $\RandLOCAL$ algorithms, that have a $1/\poly(n)$ probability of failure, for sinkless coloring and sinkless orientation of 3-regular graphs.
We say that a sinkless coloring algorithm $\mathcal{A}$ has failure probability $p$ if, for {\em each} individual edge $e=\{u,v\}$,
the probability that $\Color(u)=\Color(v)=\Color(\{u,v\})$ is at most $p$.  Thus, by the union bound, the \emph{global} probability of failure is at most $p|E|$.
We say a that sinkless orientation algorithm $\mathcal{A}$ has failure probability $p$ if, for each $v\in V$, the probability that $v$ is a sink is at most $p$.
We say that monochromatic edges and sinks are {\em forbidden configurations} for sinkless coloring and sinkless orientation, respectively.

The following two lemmas are proven in~\cite{BrandtEtal16} for $\Delta=3$.
It is straightforward to go through the details of the proof and track the dependence on $\Delta$.

\begin{lemma} [\cite{BrandtEtal16}] \label{lem-speedup1} Let $G=(V,E,\psi)$ be a $\Delta$-regular graph with girth $g$ that is equipped with a proper $\Delta$-edge coloring $\psi$.
Suppose that there is a $\RandLOCAL$ algorithm $\mathcal{A}$ for $\Delta$-sinkless coloring taking $t < \frac{g-1}{2}$ rounds such that $\forall e \in E$,
$\mathcal{A}$ outputs  a forbidden configuration at $e$ with probability at most $p$.
Then there is a $\RandLOCAL$ algorithm $\mathcal{A}'$ for $\Delta$-sinkless orientation taking $t$ rounds such that $\forall v \in V$, $\mathcal{A}'$ outputs
a forbidden configuration at $v$ with probability at most $2\Delta p^{1/3}$.
\end{lemma}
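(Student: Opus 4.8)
The plan is to manufacture the orientation directly from the coloring that $\mathcal{A}$ produces, but to do so \emph{without} any extra round of communication. The naive attempt is the ``matching'' rule: run $\mathcal{A}$ to obtain $c : V \to \{1,\dots,\Delta\}$; since $\psi$ is a proper $\Delta$-edge-coloring and $G$ is $\Delta$-regular, each vertex $v$ has a unique incident edge $e_v$ with $\psi(e_v)=c(v)$, and the only obstruction to simultaneously orienting every $e_v$ outward from $v$ is a pair of adjacent vertices pointing at the same edge, which occurs exactly when that edge is monochromatic under $c$ --- a forbidden configuration for $\mathcal{A}$. Orienting every non-conflicted $e_v$ out of $v$ (and everything else arbitrarily but consistently) makes $v$ a sink only if some edge at $v$ is monochromatic, giving per-vertex failure $\le\Delta p$ by a union bound over the $\le\Delta$ edges at $v$. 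The catch --- and the reason the lemma is not trivial --- is that this costs $t+1$ rounds: to agree on the orientation of an edge, its non-owning endpoint must learn the owner's color, which depends on the owner's entire radius-$t$ ball. Removing this extra round is the crux, and it is where the girth hypothesis and the weaker $p^{1/3}$ bound enter.

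To stay within $t$ rounds I would never let a vertex learn a neighbor's color exactly; instead it reasons about the \emph{conditional distribution} of that color given what it can see. Here $t<(g-1)/2$ is used essentially: it forces every radius-$t$ ball to be a tree of fully determined shape, so for adjacent $u,v$ both endpoints know the set $C=N^t(u)\cap N^t(v)$ and all its random bits, both know the (determined) shapes of $N^t(u)$ and $N^t(v)$, and $N^t(u)\setminus C$ and $N^t(v)\setminus C$ are disjoint vertex sets. Let $X_v$ denote $\mathcal{A}$'s output at $v$ and let $\mu_v$ be the distribution of $X_v$ conditioned on the random bits in $C$; both $u$ and $v$ can compute $\mu_u$ and $\mu_v$ from their radius-$t$ views alone, and by disjointness $X_u\perp X_v$ given the bits in $C$. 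The orientation rule is then: of the two endpoints of an edge $e$ with $\psi(e)=i$, the one whose conditional distribution assigns the larger probability to color $i$ ``grabs'' $e$ and orients it outward (ties broken with fresh local random identifiers). Every edge is grabbed by exactly one endpoint, so this is a consistent orientation computable in $t$ rounds, and $v$ is a sink precisely when it grabs nothing.

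The sink-probability analysis is routine once set up, and this is where $p^{1/3}$ is forced. If $v$ is a sink then in particular it fails to grab the edge $e^\star$ with $\psi(e^\star)=X_v$, so $\mu_v(X_v)\le\mu_w(X_v)$ for its other endpoint $w$ (all distributions taken relative to $C^\star=N^t(v)\cap N^t(w)$). Fix a threshold $\tau$. Either $\mu_v(X_v)\le\tau$: since, conditioned on the bits in $C^\star$, $X_v$ is distributed according to $\mu_v$, and at most $\Delta$ colors carry $\mu_v$-mass $\le\tau$, a short computation --- being careful that $e^\star$, hence $C^\star$, depends on $\mathcal{A}$'s randomness, so one unwinds $\Pr[\mu_v(X_v)\le\tau]$ as a sum over the $\le\Delta$ incident edges --- bounds this by $\Delta\tau$. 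Or $\mu_v(X_v)>\tau$, forcing $\mu_w(X_v)>\tau$: then color $X_v$ is ``$\tau$-heavy'' at both endpoints of $e^\star$, so by conditional independence $\Pr[\,e^\star\text{ monochromatic}\mid\text{bits in }C^\star\,]>\tau^2$; since $\mathcal{A}$'s unconditional per-edge failure is $\le p$, Markov's inequality bounds the probability that some one of the $\le\Delta$ edges at $v$ is this heavy by $\Delta p/\tau^2$. Hence $\Pr[v\text{ is a sink}]\le\Delta\tau+\Delta p/\tau^2$, and $\tau=p^{1/3}$ yields the claimed $2\Delta p^{1/3}$. I expect the only real pitfalls to be (i) keeping the ``both endpoints compute the same $\mu_u,\mu_v$ and the same tiebreak purely from their radius-$t$ views'' bookkeeping honest --- which leans entirely on the tree structure forced by $t<(g-1)/2$ --- and (ii) handling the randomness of the chosen edge $e^\star$ before applying the threshold split; tracking the $\Delta$-dependence through the two union bounds is the ``straightforward'' part the text alludes to.
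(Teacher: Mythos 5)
The paper does not reprove this lemma---it cites Brandt et al.~\cite{BrandtEtal16} (where it appears for $\Delta=3$) and simply notes that tracking the dependence on $\Delta$ is routine. Your reconstruction is correct and follows the same argument as theirs: build the orientation from the two endpoints' conditional output distributions given the shared radius-$t$ randomness (well-defined because $t<(g-1)/2$ forces tree-shaped, fully-determined balls and conditional independence across the overlap $C$), then split on a threshold $\tau$, unwind over the $\Delta$ incident edges to handle the randomness of $e^\star$, and balance $\Delta\tau+\Delta p/\tau^2$ at $\tau=p^{1/3}$ to obtain $2\Delta p^{1/3}$.
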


\begin{lemma} [\cite{BrandtEtal16}] \label{lem-speedup2} Let $G=(V,E,\psi)$ be a $\Delta$-regular graph with girth $g$ that is equipped with a proper $\Delta$-edge coloring $\psi$.
Suppose that there is a $\RandLOCAL$ algorithm $\mathcal{A}'$ for sinkless orientation taking $t < \frac{g-1}{2}$ rounds such that $\forall v \in V$,
$\mathcal{A}'$ outputs a forbidden configuration at $v$  with probability at most $p$.
Then there is a $\RandLOCAL$ algorithm $\mathcal{A}$ for $\Delta$-sinkless coloring taking $t-1$ rounds such that $\forall e \in E$,
$\mathcal{A}$ outputs  a forbidden configuration at $e$ with  probability at most $4p^{1/(\Delta+1)}$.
\end{lemma}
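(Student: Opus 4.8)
The plan is to reprove Brandt et al.'s round-elimination step, reading off its dependence on $\Delta$. Fix a vertex $v$. Since $t < (g-1)/2$, the radius-$t$ neighborhood of every vertex and edge is a tree, isomorphic to a truncated $\Delta$-regular tree; this is used twice — to get an independence statement, and to guarantee $v$ can evaluate the conditional probabilities its decision depends on. We may assume $\mathcal{A}'$ always outputs a proper orientation, so that the orientation of each edge is a well-defined random variable.

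First I set up $\mathcal{A}$'s decision rule. Imagine running $\mathcal{A}'$ on $(G,\psi)$. For each color $i\in\{1,\dots,\Delta\}$, let $e_i$ be the edge at $v$ with $\psi(e_i)=i$ and $u_i$ its other endpoint, and let $a_i$ be the probability, conditioned on $\mathcal{F}_v$ (the structure and random bits inside the radius-$(t{-}1)$ ball of $v$), that $\mathcal{A}'$ orients $e_i$ toward $v$. After $t-1$ rounds $v$ knows $\mathcal{F}_v$ and can compute each $a_i$ (see the obstacle below). The orientation of $e_i$ is a deterministic function of the random bits within distance $t$ of both endpoints of $e_i$; by the tree structure, the bits outside $\mathcal{F}_v$ that $e_1,\dots,e_\Delta$ depend on lie in pairwise disjoint regions, so conditioned on $\mathcal{F}_v$ the $\Delta$ orientations are independent and $\Pr[v\text{ is a sink}\mid\mathcal{F}_v]=\prod_i a_i$. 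Since $\mathcal{A}'$ makes $v$ a sink with probability at most $p$, we get $\E_{\mathcal{F}_v}\!\big[\prod_i a_i\big]\le p$.

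Now set $\theta := p^{1/(\Delta+1)}$ and call $v$ \emph{bad} if $\prod_i a_i \ge \theta^{\Delta}$; by Markov, $\Pr[v\text{ bad}] \le p/\theta^{\Delta} = \theta$. If $v$ is not bad then $\prod_i a_i < \theta^{\Delta}$, so the minimizing color $i^\star(v)$ has $a_{i^\star}<\theta$; $v$ outputs $\chi(v):=i^\star(v)$ (if $v$ is bad, $\chi(v)$ is arbitrary). This is computed in $t-1$ rounds. To bound the per-edge error, fix $e=\{v,u\}$ with $\psi(e)=c$. If $e$ is monochromatic then $\chi(v)=\chi(u)=c$, which forces: $v$ is bad, or $u$ is bad, or both are not bad and $a_c^{v}<\theta$, $a_c^{u}<\theta$. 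Let $Z$ be the indicator that $\mathcal{A}'$ orients $e$ toward $v$; the last case says $\Pr[Z=1\mid\mathcal{F}_v]<\theta$ and $\Pr[Z=0\mid\mathcal{F}_u]<\theta$. Intersecting these $\mathcal{F}_v$- resp.\ $\mathcal{F}_u$-measurable events with $\{Z=1\}$ resp.\ $\{Z=0\}$ and applying the tower rule bounds their joint probability by $2\theta$. Hence $\Pr[e\text{ monochromatic}] \le \Pr[v\text{ bad}]+\Pr[u\text{ bad}]+2\theta \le 4\theta = 4p^{1/(\Delta+1)}$.

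The step needing the most care is twofold. First, justifying that $v$ actually computes $a_i$: one must observe that the radius-$t$ neighborhoods together with their edge colorings fall into finitely many isomorphism types (or fix a canonical completion of the graph and of $\psi$ outside $v$'s view), so that the marginalization defining $a_i$ is well-defined and agrees from both endpoints of each edge. Second, the independence claim, where the girth hypothesis $t<(g-1)/2$ is essential and which rests on the fact that a random variable which is simultaneously a function of two sources that are independent given their common part must be a function of that common part. Everything else is the probability bookkeeping that yields the $(\Delta{+}1)$-st root and the factor $4$; carrying it through the $\Delta=3$ argument of Brandt et al.\ is routine.
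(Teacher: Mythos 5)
Your argument is correct and follows the same round-elimination approach as Brandt et al.\ \cite{BrandtEtal16}, which the paper cites without reproving: taking $\theta=p^{1/(\Delta+1)}$ so that $\Pr[\prod_i a_i\ge\theta^\Delta]\le\theta$ by Markov, having $v$ output the color minimizing $a_i$, and bounding the monochromatic-edge probability by $\theta+\theta+2\theta=4\theta$ via the union bound together with the tower-rule estimate on the two threshold events. The two subtleties you flag---computability of the $a_i$ from the canonical completion of the edge-colored $\Delta$-regular tree, and the conditional independence given $\mathcal{F}_v$ (which uses that the orientation of $e_i$, being determined from either endpoint's $t$-ball under the consistency assumption on $\mathcal{A}'$, is a function of $B_t(v)\cap B_t(u_i)$ alone, whose parts outside $B_{t-1}(v)$ are disjoint across $i$)---are exactly where the girth hypothesis enters, and your handling matches the original argument.
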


The following theorem generalizes Corollary 25 in \cite{BrandtEtal16} to allow non-constant $\Delta$ and arbitrary failure probability $p$.

\begin{theorem} \label{thm:lb-rnd-generalized}
Any $\RandLOCAL$ algorithm for $\Delta$-coloring a graph with degree at most $\Delta$ and error probability $p$ takes
at least $t = \min\{\epsilon \log_{3(\Delta+1)} \ln(1/p), \; \epsilon \log_\Delta n\} \ge 1$ rounds for a sufficiently small $\epsilon>0$.
\end{theorem}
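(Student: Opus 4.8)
The plan is to reduce the problem of $\Delta$-coloring to $\Delta$-sinkless coloring and then apply the speedup lemmas (Lemmas~\ref{lem-speedup1} and~\ref{lem-speedup2}) iteratively, exactly as in the Brandt et al.\ argument, but being careful to track the dependence on both $\Delta$ and the failure probability $p$. First I would observe that a $\Delta$-coloring algorithm on a $\Delta$-regular graph equipped with a proper $\Delta$-edge-coloring is in particular a $\Delta$-sinkless coloring algorithm: if $u$ and $v$ receive distinct colors then the edge $\{u,v\}$ cannot be monochromatic, so the per-edge failure probability is $0 \le p$. Hence it suffices to prove the lower bound for $\Delta$-sinkless coloring. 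To get a hard instance, I would invoke the existence of $\Delta$-regular graphs of girth $g = \Omega(\log_\Delta n)$ (these exist for all $\Delta \ge 3$; the same graphs Linial used) and equip them with a proper $\Delta$-edge-coloring, which exists by Vizing/K\H{o}nig since the graph is $\Delta$-regular and bipartite-like in the relevant constructions — or one can simply take the edge-coloring as part of the promised input, since $\Delta$-Sinkless Coloring is defined with the edge-coloring given.

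The core of the argument is the iteration. Suppose $\mathcal{A}_0$ is a $\RandLOCAL$ $\Delta$-sinkless coloring algorithm running in $t$ rounds with per-edge failure probability $p_0 = p$. As long as $t < (g-1)/2$ at every stage, alternately applying Lemma~\ref{lem-speedup2} and Lemma~\ref{lem-speedup1} produces, after two applications, a new $\Delta$-sinkless coloring algorithm running in $t-1$ rounds whose per-edge failure probability $p_1$ satisfies a bound of the form $p_1 \le C \cdot p_0^{c}$ where, combining the exponents $1/(\Delta+1)$ and $1/3$ from the two lemmas, $c = \tfrac{1}{3(\Delta+1)}$ and $C$ is an absolute constant (a product of the constants $4$, $2\Delta$, raised to appropriate fractional powers — the $\Delta$ factor gets absorbed once we take $\ln(1/p)$). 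Iterating $t$ times drives the round count to $0$. A $0$-round $\Delta$-sinkless coloring algorithm must, for each edge, output a monochromatic configuration with probability at least some absolute constant $q_0 > 0$ (each endpoint colors itself with no information, so by symmetry and a pigeonhole/anticoncentration argument on the $\Delta$ colors and $\Delta$ edge-colors, the probability of the bad event is bounded below independent of $\Delta$ — this is where one uses that a zero-round algorithm is just a distribution over $\{1,\dots,\Delta\}$ for each edge-color class). Therefore the final failure probability $p_t$ must satisfy $p_t \ge q_0$. Unrolling the recursion $p_{i+1} \le C p_i^{c}$ gives a bound roughly of the form $\ln(1/p_t) \ge c^{t}\bigl(\ln(1/p_0) - O(1)\bigr)$, so $q_0 \le p_t$ forces $c^{t} \ln(1/p) = O(1)$, i.e.\ $t = \Omega(\log_{1/c} \ln(1/p)) = \Omega(\log_{3(\Delta+1)} \ln(1/p))$. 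Combined with the constraint $t < (g-1)/2 = \Omega(\log_\Delta n)$ needed to keep the lemmas applicable, we conclude $t \ge \min\{\epsilon \log_{3(\Delta+1)}\ln(1/p),\ \epsilon\log_\Delta n\}$ for a suitably small constant $\epsilon$, which is the claim.

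I would set up the recursion carefully by defining $\ell_i = \ln(1/p_i)$ and rewriting $p_{i+1} \le C p_i^{c}$ as $\ell_{i+1} \ge c\,\ell_i - \ln C$; the fixed point of this affine-in-$\ell$ map is $\ell^\ast = \tfrac{\ln C}{1-c}$, which is an absolute constant, and as long as the current $\ell_i$ exceeds $2\ell^\ast$ say, we have $\ell_{i+1} \ge \tfrac{c}{2}\ell_i$ (or some similar clean multiplicative decay), so the number of steps before $\ell_i$ drops to $O(1)$ is $\Theta(\log_{2/c}\ell_0) = \Theta(\log_{3(\Delta+1)}\ln(1/p))$ — this is where the $3(\Delta+1)$ base in the statement comes from, and one must check the constants line up so that $\epsilon$ can be taken uniform in $\Delta$.

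The main obstacle I anticipate is twofold. First, bookkeeping the constants and the girth constraint simultaneously: each application of a speedup lemma consumes one round and imposes the hypothesis $t < (g-1)/2$, and after $2t$ total applications we need the girth to have been large enough at the start; this is fine since $g = \Omega(\log_\Delta n)$ and we only iterate $t = O(\log_\Delta n)$ times, but one has to verify that the constant hidden in $\epsilon \log_\Delta n$ is small enough relative to the girth constant, so that $t < (g-1)/2$ holds throughout — hence the "sufficiently small $\epsilon$". Second, and more delicate, is the base case: proving that a $0$-round $\Delta$-sinkless coloring algorithm fails with probability at least an \emph{absolute} constant $q_0$, uniformly in $\Delta$. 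The natural argument is that with zero rounds, the two endpoints of an edge $e$ of edge-color $j$ each sample a color from some distribution on $\{1,\dots,\Delta\}$ depending only on $j$ (and their degree, which is $\Delta$), and the probability that both pick the \emph{same} color equal to $j$ is $\sum_{\text{both}} = \Pr[u \text{ picks } j]\Pr[v \text{ picks } j]$; this could be tiny, so one instead has to exploit that this must hold simultaneously for \emph{all} $\Delta$ edge-colors incident to a vertex, forcing a counting contradiction — essentially the same argument Brandt et al.\ use for $\Delta=3$, but one must confirm it degrades gracefully (or not at all) as $\Delta$ grows. If the base-case constant did decay with $\Delta$, it would only strengthen, not break, the bound, so the safest route is to show $q_0$ is genuinely bounded below by an absolute constant and fold any residual $\Delta$-dependence into the $\min$ with $\log_\Delta n$.
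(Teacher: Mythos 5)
Your overall plan — reduce $\Delta$-coloring to $\Delta$-sinkless coloring on high-girth $\Delta$-regular graphs, iterate Lemmas~\ref{lem-speedup1} and~\ref{lem-speedup2} to drive the round count to zero while tracking how the per-edge failure probability grows, and then derive a contradiction from a lower bound on the failure probability of a $0$-round algorithm — is exactly what the paper does. The girth bookkeeping and the unrolling of the recursion $p_{i+1} \le C p_i^{1/(3(\Delta+1))}$ are also in the spirit of the paper's calculation, which absorbs the iterated prefactors into a bounded constant and obtains a $0$-round algorithm with failure probability $O\bigl(p^{(1/(3(\Delta+1)))^t}\bigr)$.

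The genuine gap is in the base case, and it is where you yourself flag uncertainty. You assert that a $0$-round $\Delta$-sinkless coloring algorithm must fail with probability at least an \emph{absolute} constant $q_0$, independent of $\Delta$. This is false. In $0$ rounds each vertex samples a color from some distribution $\mu$ on $\{1,\dots,\Delta\}$; an edge with edge-color $j$ fails with probability $\mu(j)^2$. Taking $\mu$ uniform makes every edge fail with probability exactly $1/\Delta^2$. The best one can say in general is $\max_j \mu(j)^2 \ge 1/\Delta^2$, since $\max_j \mu(j) \ge 1/\Delta$. So the $0$-round failure probability is $\Theta(1/\Delta^2)$, which genuinely decays with $\Delta$; there is no counting argument that rescues a $\Delta$-free constant. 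Moreover your fallback reasoning is reversed: a \emph{smaller} base-case bound $q_0$ \emph{weakens} the conclusion, because the final inequality you need is of the form $C' p^{c^t} \ge q_0$, i.e. $c^t \ln(1/p) \le \ln(C'/q_0)$, and a smaller $q_0$ makes the right side larger and the lower bound on $t$ worse, not better. The paper handles this head-on rather than by "folding into the $\min$ with $\log_\Delta n$": it uses that the regime of interest is $\epsilon \log_{3(\Delta+1)} \ln(1/p) \ge 1$, which forces $\Delta < \ln(1/p)$, so $1/\Delta^2 \ge \exp(-2\ln\ln(1/p)) \gg \exp\bigl(-(\ln(1/p))^{1-\epsilon}\bigr)$, and the contradiction goes through even with the $\Delta$-dependent base case. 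You should replace the claimed absolute constant with the correct $1/\Delta^2$ bound and close the argument by invoking $\Delta < \ln(1/p)$; otherwise the proof as written does not compile.
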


\begin{proof}
For any $\Delta\ge 3$ there exist a bipartite $\Delta$-regular graphs with girth $\Omega(\log_\Delta n)$; see~\cite{Dahan14,Bollobas78}.
Such graphs are trivially $\Delta$-edge colorable.  Moreover, any $\Delta$-coloring of such a graph is also a valid $\Delta$-sinkless coloring.
Applying Lemmas~\ref{lem-speedup1} and~\ref{lem-speedup2} we conclude that any $t$-round $\Delta$-sinkless coloring algorithm with error
probability $p$ can be transformed into a $(t-1)$-round $\Delta$-sinkless coloring algorithm with error probability
$4(2\Delta)^{\frac{1}{\Delta+1}}p^{\frac{1}{3(\Delta+1)}} < 7p^{\frac{1}{3(\Delta+1)}}$.  Iterating this process $t$ times,
it follows that there exists a 0-round $\Delta$-sinkless coloring algorithm with (local) error probability $O(p^{(\frac{1}{3(\Delta+1)})^t})$.
Note that
\[
p^{(\frac{1}{3(\Delta+1)})^t} \ge p^{(\frac{1}{3(\Delta+1)})^{\epsilon \log_{3(\Delta+1)} \ln (1/p)}} = p^{(\ln p)^{-\epsilon}} = \exp(-(\ln(1/p))^{1-\epsilon})
\]
Because the graph is $\Delta$-regular and the vertices undifferentiated by IDs, any 0-round $\RandLOCAL$ algorithm
colors each vertex independently according to the same distribution.
The probability that any vertex is involved in a forbidden configuration (a monochromatic edge)
is therefore at least $1/\Delta^2$.  Since $\epsilon\log_{3(\Delta+1)}\ln(1/p) \ge 1$ we have $\Delta < \ln(1/p)$, but

$$\frac{1}{\Delta^2} \geq  \exp(-2\ln\ln(1/p)) \gg \exp\left(-\left(\ln(1/p)\right)^{1-\epsilon}\right).$$

Thus, there is no $\RandLOCAL$ $\Delta$-sinkless coloring algorithm that takes $t$-rounds and errs with probability $p$.
\end{proof}

Corollary~\ref{thm:lb-rnd} is an immediate consequence of Theorem~\ref{thm:lb-rnd-generalized}.

\begin{corollary} \label{thm:lb-rnd}
Any $\RandLOCAL$ algorithm for $\Delta$-coloring a graph with global error probability $1/\poly(n)$ takes $\Omega(\log_\Delta\log n)$ time.
\end{corollary}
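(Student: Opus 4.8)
The plan is to obtain Corollary~\ref{thm:lb-rnd} as a direct specialization of Theorem~\ref{thm:lb-rnd-generalized} to the regime of polynomially small error. Let $\mathcal{A}$ be any $\RandLOCAL$ algorithm that $\Delta$-colors every graph of maximum degree at most $\Delta$ and fails with global probability at most $p = n^{-c}$, where $c>0$ is the constant hidden in the $1/\poly(n)$ failure bound. Because Theorem~\ref{thm:lb-rnd-generalized} already accounts for the existence of the bipartite $\Delta$-regular high-girth instances on which $\mathcal{A}$ must also succeed, I can invoke it verbatim: $\mathcal{A}$ must run for at least $t \ge \min\{\epsilon\log_{3(\Delta+1)}\ln(1/p),\ \epsilon\log_\Delta n\}$ rounds, where $\epsilon>0$ is the fixed constant supplied by that theorem.

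Next I would substitute $p = n^{-c}$, so that $\ln(1/p) = c\ln n$, and check that the first term in the minimum is the governing one. Indeed, once $n$ exceeds a constant we have $c\ln n \le n$, and since $3(\Delta+1)\ge\Delta$ this gives $\log_{3(\Delta+1)}(c\ln n) \le \log_{3(\Delta+1)} n \le \log_\Delta n$; hence $t \ge \epsilon\log_{3(\Delta+1)}(c\ln n)$.

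Finally I would translate this into the stated form by writing $\log_{3(\Delta+1)}(c\ln n) = \ln(c\ln n)/\ln(3(\Delta+1))$: the numerator is $\Theta(\ln\ln n)$, and uniformly for $\Delta\ge 2$ the denominator satisfies $\ln\Delta \le \ln(3(\Delta+1)) \le \ln\Delta + \ln 4.5$ and so is $\Theta(\ln\Delta)$. Thus $t = \Omega\!\left(\frac{\ln\ln n}{\ln\Delta}\right) = \Omega(\log_\Delta\log n)$, which is the claim (and is of course vacuous when $\log_\Delta\log n < 1$, i.e.\ when $\Delta$ is very large). I do not expect any real obstacle here --- the corollary is genuinely immediate --- so the only thing to be careful about is the elementary bookkeeping that keeps $\log_{3(\Delta+1)}$ and $\log_\Delta$ within a constant factor of each other and that confirms it is the $\ln(1/p)$ term, rather than the $\log_\Delta n$ term, that controls the bound when $p$ is polynomially small.
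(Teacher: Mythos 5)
Your proposal is correct and matches the paper's intent: the paper simply states Corollary~\ref{thm:lb-rnd} as ``an immediate consequence'' of Theorem~\ref{thm:lb-rnd-generalized}, and you have filled in exactly the straightforward substitution $p = n^{-c}$, the check that the $\ln(1/p)$ branch of the minimum is the binding one, and the constant-factor comparison $\log_{3(\Delta+1)}\ln n = \Theta(\log_\Delta \log n)$. The only tiny point you could make explicit is that the $p$ in Theorem~\ref{thm:lb-rnd-generalized} is a per-edge (local) failure bound, which is of course dominated by the global $1/\poly(n)$ failure bound, so the invocation is legitimate.
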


Theorem~\ref{thm:lb-rnd-generalized} does not immediately extend to $\DetLOCAL$.
Recall that in the $\DetLOCAL$ model vertices are initially endowed with $O(\log n)$-bit IDs whereas
in $\RandLOCAL$ they are undifferentiated.

\begin{theorem} \label{thm:lb-det}
Any $\DetLOCAL$ algorithm that $\Delta$-colors degree-$\Delta$ graphs with girth $\Omega(\log_\Delta n)$ or degree-$\Delta$ trees
requires $\Omega ({\log_\Delta n})$ time.
\end{theorem}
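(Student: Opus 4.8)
The plan is to bootstrap the randomized lower bound of Theorem~\ref{thm:lb-rnd-generalized} into a deterministic one by the standard trick of converting a fast deterministic algorithm into a fast randomized one: if vertices are undifferentiated, they can simply generate random IDs, and a deterministic algorithm that works on all ID assignments in particular works when the IDs are random. Concretely, suppose toward a contradiction that there is a $\DetLOCAL$ algorithm $\mathcal{A}_{\Det}$ that $\Delta$-colors degree-$\Delta$ graphs of girth $\Omega(\log_\Delta n)$ in $t(n,\Delta) = o(\log_\Delta n)$ rounds, given unique $\Theta(\log n)$-bit IDs. Fix $N$ and run $\mathcal{A}_{\Det}$ on an $N$-vertex $\Delta$-regular bipartite graph $G$ of girth $\Omega(\log_\Delta N)$ (which exists by the constructions cited in the proof of Theorem~\ref{thm:lb-rnd-generalized}), but feed each vertex an independently, uniformly random label from $\{0,1\}^{c\log N}$ for a sufficiently large constant $c$. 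These labels are distinct with probability $1 - O(N^2/2^{c\log N}) = 1 - 1/\poly(N)$, and conditioned on distinctness they are a valid ID assignment on which $\mathcal{A}_{\Det}$ is guaranteed to produce a proper $\Delta$-coloring. Hence this gives a $\RandLOCAL$ algorithm for $\Delta$-coloring $G$ with global error probability $1/\poly(N)$ running in $t(N,\Delta)$ rounds.

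Now the two time bounds must be reconciled. Corollary~\ref{thm:lb-rnd} (equivalently, Theorem~\ref{thm:lb-rnd-generalized} with $p = 1/\poly(N)$) says any such $\RandLOCAL$ algorithm takes $\Omega(\log_\Delta \log N)$ time, so $t(N,\Delta) = \Omega(\log_\Delta \log N)$. This is weaker than the claimed $\Omega(\log_\Delta N)$, so the key step is to amplify by making $N$ doubly-exponentially larger than the instance size the deterministic algorithm is actually designed for. That is: take a genuine instance size $n$, and set $N = 2^{2^{\Theta(n^\delta)}}$ or, more simply, $N$ large enough that $\log_\Delta \log N \ge \log_\Delta n$, i.e.\ $N \ge 2^{\Delta^{\log_\Delta n}} = 2^n$ suffices — actually we want $\log\log N \gtrsim \log n$, so $N = 2^{2^{(\log n)}} = 2^{n}$ gives $\log\log N = \log n$, hence $\log_\Delta\log N = \log_\Delta n$. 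A deterministic algorithm running in $t(m,\Delta)$ rounds on $m$-vertex graphs runs in $t(2^n,\Delta)$ rounds on a $2^n$-vertex graph; but by the randomized lower bound applied with $m = 2^n$, we get $t(2^n,\Delta) = \Omega(\log_\Delta \log(2^n)) = \Omega(\log_\Delta n)$. Since $t$ is (by the usual monotonicity convention for $\LOCAL$ complexities, or because one can pad) non-decreasing in the instance-size parameter, and since a $\DetLOCAL$ algorithm claiming to run in $o(\log_\Delta m)$ rounds on $m$ vertices in particular runs in $o(\log_\Delta 2^n) = o(n)$ rounds on $2^n$ vertices, we reach the conclusion $t(n,\Delta) = \Omega(\log_\Delta n)$ — after noting that the girth requirement is preserved because the hard graph on $N = 2^n$ vertices has girth $\Omega(\log_\Delta N) = \Omega(n)$, which exceeds $2t+1$.

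The transfer to $\Delta$-coloring trees is then immediate by the indistinguishability principle already invoked throughout Section~\ref{sec.lb}: in $t < \frac{g-1}{2}$ rounds a vertex cannot distinguish the high-girth graph $G$ from a tree, so any tree algorithm would also solve $\Delta$-coloring on $G$ within the forbidden time bound. The main obstacle I anticipate is purely bookkeeping: one must be careful that the ID length $c\log m$ that $\mathcal{A}_{\Det}$ expects on an $m$-vertex instance is exactly what the randomized simulation supplies (so that distinctness holds with probability $1 - 1/\poly(m)$ and the deterministic correctness guarantee genuinely applies), and that the girth of the $m$-vertex hard graph stays $\Omega(\log_\Delta m) \gg t(m,\Delta)$ throughout the argument, so that Lemmas~\ref{lem-speedup1} and~\ref{lem-speedup2} underlying Theorem~\ref{thm:lb-rnd-generalized} remain in force. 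No new combinatorial idea is needed beyond the random-ID simulation plus choosing the instance size parameter to be exponential in the ``true'' size.
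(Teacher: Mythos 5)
Your opening move—generate random IDs and feed them to the deterministic algorithm—is the right starting point and matches the paper, but your amplification step does not reach a contradiction. You show $t(2^n,\Delta)=\Omega(\log_\Delta n)$, while the assumption to be contradicted, $t(m,\Delta)=o(\log_\Delta m)$, only gives $t(2^n,\Delta)=o(\log_\Delta 2^n)=o(n/\log\Delta)$. These two statements coexist happily (e.g.\ $t(2^n,\Delta)=\Theta(\sqrt{n})$ satisfies both), so you have merely restated the randomized lower bound in new notation rather than upgraded it. Monotonicity of $t$ does not rescue this: it gives $t(n,\Delta)\le t(2^n,\Delta)$, an upper bound on $t(n,\Delta)$, whereas you need a lower bound. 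The bookkeeping you worry about (ID length matching $c\log m$) is not the real obstacle.

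The missing idea is that you must apply Theorem~\ref{thm:lb-rnd-generalized} directly at instance size $n$ with an \emph{exponentially} small error probability $p=2^{-\Omega(n)}$, since its bound is $\min\{\log_\Delta\log(1/p),\log_\Delta n\}$ and you need the first term to reach $\log_\Delta n$. That forces you to generate $n$-bit random IDs (so collisions occur with probability $<n^2 2^{-n}$), which $\mathcal{A}_{\Det}$ was never designed to accept. The paper's extra trick resolves exactly this mismatch: form the auxiliary graph $G'$ on $V$ joining every pair of vertices at distance $\le 2t+1$ in $G$ (so $\Delta(G')<n$), apply one step of Linial's recoloring (Theorem~\ref{thm:reduction}) to the $2^n$-coloring furnished by the long random IDs to obtain an $O(n^3)$-coloring, i.e.\ $(3\log n+O(1))$-bit IDs that are distinct within every radius-$t$ view, and only then run $\mathcal{A}_{\Det}$ with these short IDs. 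A $t$-round deterministic algorithm cannot tell locally-unique IDs from globally-unique ones, so it still outputs a proper coloring; the sole failure mode is the initial $n$-bit IDs colliding, with probability $p<n^2/2^n$. Plugging this $p$ into Theorem~\ref{thm:lb-rnd-generalized} yields $\Omega(\log_\Delta n)$ on the $n$-vertex instance itself, with no scaling of the instance size required. Without this Linial-shrinking step there is no way to reconcile exponentially small error with the ID-length contract of $\mathcal{A}_{\Det}$, and the padding route you chose cannot substitute for it.
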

\begin{proof}
Let $\mathcal{A}_{\Det}$ be a $\DetLOCAL$ algorithm that $\Delta$-colors a graph in $t=t(n,\Delta)$ rounds and $G$ be the input graph.
We construct a $\RandLOCAL$ algorithm $\mathcal{A}_{\Rand}$
taking $O(t)$ rounds as follows.  Before the first round each vertex locally generates a random
$n$-bit ID.
Assume for the time being that these IDs are unique, and therefore constitute a $2^n$-coloring of $G$.
Let $G' = (V, \{(u,v) \;|\; \dist_G(u,v) \le 2t+1\})$.  The maximum degree $\Delta'$ in $G'$ is clearly less than $n$.
We apply one step of Linial's recoloring algorithm (Theorem~\ref{thm:reduction}) to $G'$ and obtain a coloring with palette size $O(\Delta'^2\log(2^n)) = O(n^3)$.
A step of Linial's algorithm in $G'$ is simulated in $G$ using $O(t)$ time.  Using these colors as $(3\log n+O(1))$-bit IDs,
we simulate $\mathcal{A}_{\Det}$ in $G$ for $t$ steps.  Since no vertex can see two vertices with the same ID, this algorithm
necessarily behaves as if all IDs are unique.  Observe that because $\mathcal{A}_{\Det}$ is deterministic,
the only way $\mathcal{A}_{\Rand}$ can err is if the initial $n$-bit IDs fail to be unique.  This occurs with probability $p < n^2/2^n$.
By Theorem~\ref{thm:lb-rnd-generalized} $\mathcal{A}_{\Rand}$ takes $\Omega(\min\{\log_\Delta\log(1/p), \, \log_\Delta n\}) = \Omega(\log_\Delta n)$ time.
\end{proof}

\section{Gaps in Deterministic Time Complexity \label{sec.lb2}}

The Time Hierarchy Theorem informally says that a Turing machine can solve more problems given more time. A similar question can be asked in the setting of distributed computation. For example, does increasing the number of rounds from $\Theta(\log^\ast n)$ to $\Theta(\log \log n)$ allow one to solve more problems? In this section, we will demonstrate a general technique that allows one to speedup deterministic algorithms in the $\DetLOCAL$ model. Based on this technique, we demonstrate the existence of a ``gap'' in possible $\DetLOCAL$ complexities.

A graph class is {\em hereditary} if it is closed under removing vertices and edges.
Examples of hereditary graph classes are general graphs, forests, bounded arboricity graphs, triangle-free graphs, planar graphs, and chordal graphs.
We prove that for graphs with constant $\Delta$ the time complexity of {\em any} LCL problem on any graph from a \emph{hereditary} graph class is either $\Omega(\log n)$ or $O(\log^\ast n)$.

\begin{theorem} \label{thm:lb-det-linial}
Let $\mathcal{P}$ be an LCL graph problem with parameters $r$, $\Sigma$, and $\mathcal{C}$,  and let $\mathcal{A}$ be a $\DetLOCAL$ algorithm for solving $\mathcal{P}$.
Let $\beta$ be the universal constant from Theorem~\ref{cor:linial-coloring}.
Suppose that the cost of $\mathcal{A}$ on instances of $\mathcal{P}$ with $n$ vertices, where the instance is taken from a hereditary graph class, is at most $ f(\Delta) + \epsilon \log_{\Delta} n $ time, where $f(\Delta)\geq 0$ and $\epsilon = \frac{1}{4+4\log\beta+4r}$ is a constant. Then there exists a $\DetLOCAL$ algorithm $\mathcal{A}'$ that solves $\mathcal{P}$ on the same instances in $O\left((1+f(\Delta)(\log^\ast n - \log^\ast \Delta+1) \right)$ time.
\end{theorem}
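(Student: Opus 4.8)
The plan is to trick $\mathcal{A}$ into running as if the instance had only $n_1 := \Delta^{\Theta(1+f(\Delta))}$ vertices rather than $n$, which collapses the $\epsilon\log_\Delta n$ term down to $O(1+f(\Delta))$; the only price is one application of Linial's coloring (Theorem~\ref{cor:linial-coloring}) to a bounded power of $G$. Set $t_1 := f(\Delta) + \epsilon\log_\Delta n_1$, the running time of $\mathcal{A}$ when told the parameter $n_1$, and let $k := \Theta(t_1 + r)$ be large enough for the correctness argument below. Consider the power graph $G^k = (V,\{(u,v): \dist_G(u,v)\le k\})$; its maximum degree is $\Delta' < \Delta^{k+1}$, and the original $\Theta(\log n)$-bit IDs of $G$, being globally unique, constitute a proper $n^{O(1)}$-coloring of $G^k$. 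Applying Theorem~\ref{cor:linial-coloring} to $G^k$ yields a proper $\beta\Delta'^2$-coloring of $G^k$ in $O(\log^\ast n - \log^\ast \Delta' + 1)$ rounds of $G^k$, i.e.\ $O\!\left(k(\log^\ast n - \log^\ast\Delta+1)\right)$ rounds of $G$ (using $\Delta'\ge\Delta$). If $n_1$ is chosen so that $\beta\Delta'^2 \le n_1$, these colors form a legal set of IDs for an $n_1$-vertex instance, and they are distinct between any two vertices at distance $\le k$ in $G$. Algorithm $\mathcal{A}'$ is then: compute this coloring, and run $\mathcal{A}$ on $G$ with these colors as IDs and $n_1$ as the vertex-count parameter, for $t_1$ rounds.

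For correctness, fix a vertex $v$. As $\mathcal{A}$ is deterministic, its output at any vertex $u$ is a function of the ID-labeled ball $N^{t_1}(u)$, and since $\mathcal{P}$ is an LCL of radius $r$, acceptability at $v$ is determined by the outputs on $N^r(v)$, hence by the ID-labeled ball $N^{t_1+r}(v)$ (or a slightly larger ball, if one also accounts for how much of a boundary vertex's neighborhood affects $\mathcal{A}$'s view --- this is the constant hidden inside $k$). Since the graph class is hereditary, the subgraph induced by this ball lies in the class; and since we chose $n_1$ so that the ball has at most $n_1$ vertices (and we may assume $n_1 < n$, see below), we may delete the far-away vertices of $G$ to obtain a genuine $n_1$-vertex instance in the class on which the locally-unique colors are globally unique IDs and $u$'s $t_1$-round view is unchanged for every $u\in N^r(v)$. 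On that instance $\mathcal{A}$ (told $n_1$) is correct, and by locality it produces around $v$ exactly the labels $\mathcal{A}'$ produces; hence the solution is acceptable at $v$. As $v$ was arbitrary, $\mathcal{A}'$ solves $\mathcal{P}$.

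It remains to choose $n_1$. Writing $L = \log_\Delta n_1$, I need $L$ large enough that $|N^k(v)|\le\Delta^{k+1}\le n_1$ and $\beta\Delta'^2\le\beta\Delta^{2k+2}\le n_1$; the latter dominates and, with $k = \Theta(t_1+r)$, amounts to a constraint of the form $L \ge 4t_1 + O(r+\log\beta) = 4(f(\Delta)+\epsilon L) + O(r+\log\beta)$, i.e.\ $L(1-4\epsilon)\ge 4f(\Delta)+O(r+\log\beta)$. This is solvable with $L = \Theta(1+f(\Delta))$ precisely because $\epsilon = \frac{1}{4+4\log\beta+4r} < \frac{1}{4}$, the exact value being chosen so the Linial constant $\beta$ and the LCL radius $r$ are absorbed cleanly. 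With such an $n_1$ we get $t_1 = O(1+f(\Delta))$ and $k = O(1+f(\Delta))$, so $\mathcal{A}'$ runs in $O\!\left((1+f(\Delta))(\log^\ast n-\log^\ast\Delta+1)\right)$ rounds for the coloring plus $O(1+f(\Delta))$ for simulating $\mathcal{A}$. (If instead $n_1\ge n$, then $\mathcal{A}$ run on $G$ directly already costs only $f(\Delta)+\epsilon\log_\Delta n = O(1+f(\Delta))$, so we may assume $n_1<n$ as claimed.) The step I expect to be the real obstacle is the correctness argument --- pinning down exactly which ball around $v$ must be preserved so that $\mathcal{A}$'s run on $G$ with the fake parameter is indistinguishable from a run on a genuine $n_1$-vertex instance of the hereditary class, and verifying that hereditariness really is the property that makes this go through; everything else is arithmetic arranged around that fact.
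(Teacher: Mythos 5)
Your proposal is correct and follows essentially the same approach as the paper's proof: apply Linial's coloring (Theorem~\ref{cor:linial-coloring}) to a bounded power of $G$ to produce short IDs that are locally distinct, run $\mathcal{A}$ with these IDs while lying about the graph size, and use hereditariness to show that each vertex's radius-$r$ neighborhood, together with the preserved $t_1$-balls of its nearby vertices, is indistinguishable from a genuine small instance of the class. The only difference is bookkeeping---you parametrize via a fake vertex count $n_1 = \Delta^L$ and solve the resulting fixed-point constraint on $L$ explicitly, whereas the paper works directly with the shortened ID length $\ell'$ and fixed constants $\tau$, $r$---but both resolve the circular dependence between the simulation time and the required ID-distinctness radius by the same choice of $\epsilon < 1/4$.
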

\begin{proof}
Notice that for any instance of $\mathcal{P}$ with $n$ vertices and ID length $\ell$, it must be that $\ell \geq \log n$ and so the running time of $\mathcal{A}$ on such instances is bounded by $T(\Delta, \ell) \leq f(\Delta) + \frac{\epsilon \ell}{\log \Delta}$.

Let $G=(V,E)$ be an instance of $\mathcal{P}$.
The algorithm $\mathcal{A}'$ on $G$ works as follows. Let $\tau = 1+\log \beta $ be a constant. We use Linial's coloring technique to produce short $\ID$s of length $\ell'$ that are distinct within distance $4f(\Delta) +2\tau+2r$.
Let $G'=(V,E')$ be the graph $E' = \{ \{u,v\} | u,v \in V \textit{ and } \dist_{G}(u,v)\leq 4f(\Delta) +2\tau+2r\}$.
The max degree in $G'$ is at most $\Delta^{4f(\Delta) +2\tau+2r}$ since $\max_{u\in V} \{ |N^{4f(\Delta) +2\tau+2r}(u)|\}\leq \Delta^{4f(\Delta) +2\tau+2r}$.
Each vertex $u\in V$ simulates $G'$ by collecting $N^{4f(\Delta) +2\tau+2r}(u)$ in $O(f(\Delta)+\tau+r)$ time.

We simulate the algorithm of Theorem~\ref{cor:linial-coloring} on $G'$ by treating each of the $\ell$ bit IDs of vertices in $V$ as a color. This produces a $\beta \cdot \Delta^{8f(\Delta) +4\tau+4r} $-coloring, which is equivalent to identifiers of length $\ell' =(8f(\Delta) + 4\tau+4r)\log \Delta + \log \beta$. Although these identifiers are not globally unique, they are distinct in $N^{2f(\Delta) + \tau+r}(u)$ for each vertex $u\in V$.
The time complexity of this process is $$(4f(\Delta) + 2\tau+2r)\cdot O\left(\log^\ast n - \log^\ast \Delta + 1\right).$$

Finally, we apply $\mathcal{A}$ on $G$ while implicitly assuming that the graph size is $2^{\ell'}$ and using the shorter $\ID$s.
The runtime of this execution of $\mathcal{A}$ is:
\begin{align*}
f(\Delta) + \frac{\epsilon \ell'}{\log \Delta} & = f(\Delta) + \frac{\epsilon ((8f(\Delta) + 4\tau+4r)\log \Delta + \log \beta)}{\log \Delta}\\
 & = (1+8\epsilon) f(\Delta) + 1+\frac{\epsilon \log \beta}{\log \Delta} & \epsilon (4\tau+4r)= 1\\
 & \leq (1+8\epsilon) f(\Delta) + \tau  & \log \Delta \geq 1, \epsilon < 1\\
 & \leq 2f(\Delta) + \tau. & 8\epsilon = \frac{2}{\tau+r} \leq 1
 \end{align*}

Whether the output labeling of $u \in V$ is legal depends on the labeling of the vertices in $N^r(u)$, which depends on the graph structure and the $\ID$s in $N^{2f(\Delta) + \tau + r}(u)$.
Due to the hereditary property of the graph class under consideration, for each $u \in V$, $N^{2f(\Delta) + \tau + r}(u)$ is isomorphic to a subgraph of some $2^{\ell'}$-vertex graph in the same class. Moreover, the shortened $\ID$ in $N^{2f(\Delta) + \tau + r}(u)$ are distinct. Therefore, it is guaranteed that the output of the simulation is a legal labeling.

The total time complexity is
$$(4f(\Delta) + 2\tau+2r)\cdot O\left(\log^\ast n - \log^\ast \Delta + 1\right) +  2f(\Delta) +\tau = O\left((1+f(\Delta)(\log^\ast n - \log^\ast \Delta+1) \right).$$
\end{proof}

Combining Theorem~\ref{thm:lb-det-linial} with Corollary~\ref{thm:lb-rnd} and setting $f(\Delta) = O(1)$ provides a new proof of Theorem~\ref{thm:lb-det} for small enough $\Delta$.
To see this, notice that any lower bound for the 
$\RandLOCAL$ model with error probability $1/\poly(n)$ 
can be adapted to $\DetLOCAL$ since we can randomly pick 
$O(\log n)$-bit IDs that are distinct with probability $1-1/\poly(n)$.
From Corollary~\ref{thm:lb-rnd} any $\DetLOCAL$ algorithm that $\Delta$-colors a degree-$\Delta$ tree requires $\Omega ({\log_\Delta \log n})$ time.
However, Theorem~\ref{thm:lb-det-linial} states that any $\DetLOCAL$ algorithm running in $O(1) + o(\log_\Delta n)$ time can be speeded up to run in $O\left(\log^\ast n - \log^\ast \Delta + 1\right)$ time. This contradicts the lower bound whenever $\log_\Delta \log n \gg \log^\ast n - \log^\ast \Delta + 1$.
Hence $\Delta$-coloring a degree-$\Delta$ tree takes $\Omega(\log_\Delta n)$ time in $\DetLOCAL$ for small enough $\Delta$ such that $\log_\Delta \log n \gg \log^\ast n - \log^\ast \Delta + 1$.

Another consequence of Theorem~\ref{thm:lb-rnd} is that the deterministic time complexity of a problem can either be solved very efficiently (i.e. in $O\left((1+f(\Delta)(\log^\ast n - \log^\ast \Delta + 1)\right)$ time) or requires $\Omega(f(\Delta) + {\log_\Delta n})$ time (which is at least the order of the diameter when the underlying graph is a complete regular tree). Such a consequence is the strongest when $\Delta$ is small. For example, if $\Delta$ is a constant, Theorem~\ref{thm:lb-rnd} implies the following corollary:

\begin{corollary} \label{cor-gap}
The time complexity of any LCL problem on any hereditary graph class that has constant $\Delta$ in the $\DetLOCAL$ model is either $\Omega(\log n)$ or $O(\log^\ast n)$.
\end{corollary}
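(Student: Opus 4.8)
The plan is to read off the dichotomy directly from Theorem~\ref{thm:lb-det-linial}, specializing it to $\Delta = O(1)$ and to a \emph{constant} slack function $f(\Delta)$. Fix an LCL problem $\mathcal{P}$ with parameters $r,\Sigma,\mathcal{C}$ on a hereditary graph class all of whose members have maximum degree bounded by a constant $\Delta$, and let $\mathcal{A}$ be a $\DetLOCAL$ algorithm for $\mathcal{P}$ achieving the optimal complexity $T(n)$. I would split into the two alternatives of the claimed statement: either $T(n) = \Omega(\log n)$, in which case there is nothing to prove, or $T(n)$ is \emph{not} $\Omega(\log n)$, i.e., the complexity of $\mathcal{P}$ is $o(\log n)$.

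In the second case the key observation is that, since $\Delta$ is a constant, $\log_\Delta n = \Theta(\log n)$, so $T(n) = o(\log_\Delta n)$. Let $\epsilon = \tfrac{1}{4+4\log\beta+4r}$ be the constant appearing in Theorem~\ref{thm:lb-det-linial}, which depends only on the fixed quantities $\beta$ (the universal constant of Theorem~\ref{cor:linial-coloring}) and $r$. Then there is a threshold $n_0$ with $T(n) \le \epsilon \log_\Delta n$ for all $n \ge n_0$; and for the finitely many $n < n_0$ we trivially have $T(n) \le C$ where $C := \max_{m < n_0} T(m)$ is a constant. Hence $T(n) \le C + \epsilon \log_\Delta n$ holds for \emph{every} $n$, which is precisely the hypothesis of Theorem~\ref{thm:lb-det-linial} with the choice $f(\Delta) := C \ge 0$.

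Applying Theorem~\ref{thm:lb-det-linial} with this $f$ then produces a $\DetLOCAL$ algorithm $\mathcal{A}'$ solving $\mathcal{P}$ on the same hereditary instances in $O\!\left((1+C)(\log^\ast n - \log^\ast \Delta + 1)\right)$ time; since both $C$ and $\Delta$ are constants, this is $O(\log^\ast n)$. Thus the complexity of $\mathcal{P}$ is $O(\log^\ast n)$, which is the second alternative. Combining the two cases yields the corollary.

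I do not expect a genuine obstacle here: all the real work is in Theorem~\ref{thm:lb-det-linial}. The only point that requires a little care is the step from ``$T(n)$ is not $\Omega(\log n)$'' to a bound of the form $T(n) \le f(\Delta) + \epsilon \log_\Delta n$ valid for \emph{all} $n$ (as Theorem~\ref{thm:lb-det-linial} demands, not merely for large $n$). This is resolved by absorbing the finitely many small instance sizes into the additive constant $f(\Delta)$, which is legitimate precisely because $\Delta$ is fixed, so the set of instance sizes before the $o(\log n)$ asymptotics take over is finite and the maximum running time over them is a constant.
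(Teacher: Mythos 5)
Your proposal is correct and is exactly the specialization of Theorem~\ref{thm:lb-det-linial} that the paper intends (the paper leaves the unwinding to the reader; the cited ``Theorem~\ref{thm:lb-rnd}'' in the surrounding text is evidently a typo for Theorem~\ref{thm:lb-det-linial}). Your step of absorbing the finitely many small $n$ into the additive term $f(\Delta)$ so that the hypothesis of Theorem~\ref{thm:lb-det-linial} holds for \emph{every} $n$ is a genuine point the paper glosses over, and you handle it correctly. One tiny caveat worth being aware of, which the paper is equally silent about: ``$T(n)$ is not $\Omega(\log n)$'' is formally weaker than ``$T(n) = o(\log n)$'' (the former only forces $\liminf T(n)/\log n = 0$), so the dichotomy as literally phrased implicitly rules out oscillating complexities; the nontrivial implication that is actually being proved, and that you prove, is ``$o(\log n) \Rightarrow O(\log^\ast n)$.''
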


A simple adaptation of the proof of Theorem~\ref{thm:lb-det-linial} shows an even stronger dichotomy when $\Delta=2$.

\begin{theorem}\label{thm:cycle}
The $\DetLOCAL$ time complexity of any LCL problem on any hereditary graph class with $\Delta = 2$ is either $\Omega(n)$ or $O(\log^\ast n)$.
\end{theorem}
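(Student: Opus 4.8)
The plan is to exploit that $\Delta=2$ forces the hereditary class --- call it $\mathcal{F}$ --- to consist entirely of disjoint unions of paths and cycles, and to adapt the ``compress the $\ID$s with Linial's coloring, then lie about $n$'' idea behind the proof of Theorem~\ref{thm:lb-det-linial}, but pushing the lie all the way down to a \emph{constant} perceived size. We may assume $\mathcal{P}$ (radius $r$) is solvable by some $\DetLOCAL$ algorithm $\mathcal{A}$; if the running time $T(n)$ of $\mathcal{A}$ is not $o(n)$ there is nothing to prove, so assume $T(n)=o(n)$ and let us produce an $O(\log^\ast n)$-round algorithm $\mathcal{A}'$ --- this establishes that the $\DetLOCAL$ complexity is either $\Omega(n)$ or $O(\log^\ast n)$.

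If the connected components of graphs in $\mathcal{F}$ have sizes bounded by some constant $k$, then in $O(k)=O(1)$ rounds every vertex learns its entire component and outputs the lexicographically-least legal labeling of it (such a labeling exists since $\mathcal{P}$ is solvable and an LCL constraint inside a whole component is internal to it), and all vertices of a component agree on it; this is an $O(1)\subseteq O(\log^\ast n)$ algorithm. Otherwise $\mathcal{F}$ has components of unbounded size, so --- a long cycle minus a vertex being a long path --- $\mathcal{F}$ contains the path $P_m$ on $m$ vertices for every $m$. Because $T(m)/m\to 0$, any sufficiently large constant $N^\ast$ satisfies $2\big(T(N^\ast)+r\big)+1<N^\ast$; fix such an $N^\ast$ and set $\rho:=T(N^\ast)+r=O(1)$. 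Algorithm $\mathcal{A}'$ on $G\in\mathcal{F}$ with $n$ vertices does: (1) every vertex inspects all vertices within distance $N^\ast$ of itself in $O(N^\ast)=O(1)$ rounds, and if its component has at most $N^\ast$ vertices it outputs the lexicographically-least legal labeling of that component (with respect to the original $\ID$s) and stops; (2) otherwise it runs Linial's coloring (Theorem~\ref{cor:linial-coloring}) on the power graph $G^\rho=(V,\{uv:1\le\dist_G(u,v)\le\rho\})$, whose maximum degree is at most $2\rho=O(1)$, obtaining a proper $\beta(2\rho)^2$-coloring in $O\big(\rho(\log^\ast n-\log^\ast(2\rho)+1)\big)=O(\log^\ast n)$ rounds of $G$, and then in $O(1)$ further rounds greedily reduces the palette to $2\rho+1<N^\ast$ colors; this yields $\mathrm{id}'\colon V\to\{0,1\}^{\lceil\log N^\ast\rceil}$ injective on every ball $N^\rho(v)$; (3) it simulates $\mathcal{A}$ for $T(N^\ast)=O(1)$ rounds with global parameters $n=N^\ast,\ \Delta=2$ and identifiers $\mathrm{id}'$, and outputs $\mathcal{A}$'s output. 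The total time is $O(\log^\ast n)$.

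For correctness of steps (2)--(3), fix a vertex $v$ whose component has more than $N^\ast$ vertices and let $B:=N^\rho(v)$, a path segment on at most $2\rho+1<N^\ast$ vertices with at most one vertex of degree $1$ in $G$ (not two, since $v$'s component has more than $2\rho+1$ vertices). Embed $B$ isometrically into $P_{N^\ast}\in\mathcal{F}$ --- with $v$ near the center of $P_{N^\ast}$, or with the degree-$1$ vertex of $B$ (if any) at an endpoint of $P_{N^\ast}$ --- preserving $\dist$, degrees, and $\mathrm{id}'$ on $B$, and extend $\mathrm{id}'$ to globally distinct $\lceil\log N^\ast\rceil$-bit identifiers on all of $P_{N^\ast}$, which is possible since $2^{\lceil\log N^\ast\rceil}\ge N^\ast$. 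For every $u\in N^r(v)$ we have $N^{T(N^\ast)}(u)\subseteq B$, so the labeled ball that $\mathcal{A}$ sees at $u$, together with the parameters $(N^\ast,2)$, is identical whether $\mathcal{A}$ is run on $G$ or on the genuine instance $P_{N^\ast}$; hence the labels produced on $N^r(v)$ coincide in the two runs. Since $\mathcal{A}$ is correct on $P_{N^\ast}\in\mathcal{F}$, the radius-$r$ window around $v$'s image lies in $\mathcal{C}$, and therefore so does the window around $v$ in $G$.

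The step I expect to be delicate is keeping every local view that $\mathcal{A}$ encounters a legitimate sub-instance. The $o(n)$ hypothesis --- as opposed to the $O(\log n)$-type hypothesis of Theorem~\ref{thm:lb-det-linial} --- is precisely what lets $N^\ast$, and hence $\rho$ and $T(N^\ast)$, be absolute constants, collapsing everything outside the single Linial-coloring call to $O(1)$ time; with only an $O(\log n)$ bound the perceived size would still grow with $n$. The two technical snags are that small components --- short cycles in particular --- are not sub-instances of any long path and so must be peeled off and solved by brute force, and that Linial's coloring over-colors $G^\rho$ by a factor quadratic in its degree, forcing the constant-round palette reduction down to $2\rho+1\le N^\ast$ colors before the shortened identifiers are few enough to be legal for a size-$N^\ast$ instance.
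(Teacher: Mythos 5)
Your proposal follows the same route the paper intends: since $\Delta=2$ makes a radius-$\rho$ ball on a path or cycle contain only $2\rho+1$ vertices rather than $\Delta^\rho$, the Linial-recoloring trick of Theorem~\ref{thm:lb-det-linial} can pretend the graph has a \emph{constant} number $N^\ast$ of vertices whenever $T(n)=o(n)$, and the rest (collect short components by brute force, embed long-component balls isometrically into $P_{N^\ast}$) is the natural adaptation. There is, however, an off-by-two in the Linial step: a proper coloring of $G^\rho$ with $\rho=T(N^\ast)+r$ only distinguishes vertices at mutual distance $\le\rho$, while two vertices of the ball $N^\rho(v)$ can be at distance $2\rho$, so the shortened identifiers need not be injective on $N^\rho(v)$ as you claim. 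You should instead color $G^{2\rho}$ (degree at most $4\rho$), greedily reduce to $4\rho+1$ colors, and pick $N^\ast$ with $4(T(N^\ast)+r)+1<N^\ast$ --- still available from $T(n)=o(n)$ --- after which your embedding argument and the $O(\log^\ast n)$ bound go through unchanged.
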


We remark that an interpretation of the time complexity requirement in Theorems~\ref{thm:lb-det-linial} and~\ref{thm:cycle} is that the diameter of a graph with maximum degree $\Delta$ is at least $\Omega({\log_\Delta n})$ for $\Delta \geq 3$ and $\Omega(n)$ when $\Delta = 2$. If we allow the possibility for an algorithm to see the entire graph, then the algorithm can solve the problem globally.

Given a $O(\sqrt{\log n})$-time deterministic algorithm, one may feel that it is possible to use Theorem~\ref{thm:lb-det-linial} to improve the time complexity to $O(\log^\ast n)$
since $\sqrt{\log n} = o(\log_\Delta n)$ for the case $\Delta = \exp({o(\sqrt{\log n})})$.
However, the class of graphs with $\Delta = \exp({o(\sqrt{\log n})})$ is not hereditary, and so Theorem~\ref{thm:lb-det-linial} does not apply.
Nonetheless, Linial's coloring technique can be made to speed up algorithms with time complexity of the form $f(\Delta) + g(n)$.

\begin{theorem}\label{thm:gen}
Let $\mathcal{P}$ be an LCL graph problem with parameters $r$, $\Sigma$, and $\mathcal{C}$,  and let $\mathcal{A}$ be a $\DetLOCAL$ algorithm for solving $\mathcal{P}$.
Suppose that the runtime of the algorithm $\mathcal{A}$ on instances of $\mathcal{P}$ from a hereditary graph class
is $O(\log^{k} \Delta + \log^{\frac{k}{k+1}} n)$. Then there exists a deterministic algorithm $\mathcal{A}'$ that solves $\mathcal{P}$ on the same instances in $O(\log^{k} \Delta (\log^\ast n - \log^\ast \Delta + 1))$ time.
\end{theorem}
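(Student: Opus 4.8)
The plan is to mimic the proof of Theorem~\ref{thm:lb-det-linial} but to be more careful about bookkeeping, since now the dependence on $n$ is $\log^{k/(k+1)} n$ rather than $\log_\Delta n$, and the graph class restriction on $\Delta$ is no longer available. Write the runtime of $\mathcal{A}$ as $c(\log^k \Delta + \log^{k/(k+1)} n)$ for some constant $c$. The idea, as before, is to first shorten the IDs via one invocation of Linial's coloring on a suitable power graph $G'$, and then run $\mathcal{A}$ pretending that the number of vertices is $2^{\ell'}$, where $\ell'$ is the new ID length. The key quantitative question is: how short can we make $\ell'$, and does running $\mathcal{A}$ on the shortened instance actually give a net speedup?

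First I would fix a target radius $R$ (to be chosen) and let $G'$ be the graph on $V$ with an edge between $u,v$ whenever $\dist_G(u,v)\le R$; its maximum degree is at most $\Delta^{R}$, so $O(\log R + \log\log\Delta) = O(R\log\Delta)$, and one step of Linial's algorithm (Theorem~\ref{cor:linial-coloring}) applied to $G'$ produces a $\beta\Delta^{2R}$-coloring in $R\cdot O(\log^\ast n - \log^\ast\Delta + 1)$ rounds of $G$. The new ID length is $\ell' = 2R\log\Delta + \log\beta = O(R\log\Delta)$, and these IDs are distinct in every ball of radius roughly $R/2$ in $G$. Now running $\mathcal{A}$ on $G$ with the pretend size $2^{\ell'}$ costs
\[
c\paren{\log^k\Delta + (\ell')^{k/(k+1)}} = c\paren{\log^k\Delta + O\paren{(R\log\Delta)^{k/(k+1)}}}.
\]
For the hereditary-class argument to certify correctness, we need $\mathcal{A}$'s runtime on the shortened instance — call it $T'$ — to satisfy $T' + r \le R/2$ (so that the view that determines each vertex's output is contained in a ball where IDs are genuinely distinct, and, by heredity, is isomorphic to a sub-instance of a legitimate $2^{\ell'}$-vertex graph in the class). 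This self-consistency constraint is where the exponent $k/(k+1)$ is exactly right: we want to choose $R$ as small as possible subject to $R \gtrsim (R\log\Delta)^{k/(k+1)} + \log^k\Delta$. Solving $R \approx (R\log\Delta)^{k/(k+1)}$ gives $R \approx (\log\Delta)^{k}$, i.e.\ $R = \Theta(\log^k\Delta)$ works (and it also dominates the additive $\log^k\Delta$ term). With this choice, the Linial phase costs $O(\log^k\Delta)\cdot O(\log^\ast n - \log^\ast\Delta + 1)$, and the simulation of $\mathcal{A}$ costs $O(\log^k\Delta)$, for a total of $O(\log^k\Delta(\log^\ast n - \log^\ast\Delta + 1))$, as claimed.

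I expect the main obstacle to be making the self-consistency inequality $T'(\ell') + r \le R/2$ go through cleanly: $T'$ itself depends on $\ell' = \Theta(R\log\Delta)$, so one is solving a fixed-point-type inequality in $R$, and one must verify that a constant-factor slack is available so that the constants $c$, $r$, $\beta$ do not force $R$ to blow up super-polynomially in $\log\Delta$. The resolution is that $(R\log\Delta)^{k/(k+1)}$ is sublinear in $R$ once $R$ exceeds $\poly(\log\Delta)$ by a large enough constant factor, so choosing $R = K\log^k\Delta$ for a sufficiently large constant $K = K(c,k,r,\beta)$ satisfies the inequality with room to spare; the remaining steps (degree bound on $G'$, correctness via heredity and local distinctness of IDs, summing the two phases) are then routine and parallel those in Theorem~\ref{thm:lb-det-linial}. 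One should also double-check the boundary behavior when $\log\Delta$ is $O(1)$ or when $\log^\ast n - \log^\ast\Delta + 1$ is the dominant term, but in both regimes the stated bound absorbs the contributions trivially.
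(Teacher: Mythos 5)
Your proposal is correct and follows essentially the same route as the paper's proof: build a power graph of radius $R=\Theta(\log^k\Delta)$, run one step of Linial's coloring (Theorem~\ref{cor:linial-coloring}) to get short locally-unique IDs of length $\ell'=\Theta(R\log\Delta)$, simulate $\mathcal{A}$ on the pretend $2^{\ell'}$-vertex instance, and close the argument by solving the fixed-point inequality $T'(\ell')+r\le R/2$, which is exactly the paper's condition $\epsilon_1+\epsilon_2\bigl(4(\epsilon+r+\log\beta)\bigr)^{k/(k+1)}\le\epsilon$ under the change of variable $R\leftrightarrow 2\tau+2r$ with $\tau=\epsilon\log^k\Delta$. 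The only blemish is the stray parenthetical ``$O(\log R+\log\log\Delta)=O(R\log\Delta)$'', which appears to be a typo and is not used; the rest of the bookkeeping matches the paper's.
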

\begin{proof}
Notice that for any instance of $\mathcal{P}$ with $n$ vertices and ID length $\ell$, it must be that $\ell \geq \log n$ and so the running time of $\mathcal{A}$ on such instances is bounded by $\epsilon_1 \log^{k} \Delta + \epsilon_2 \ell^{\frac{k}{k+1}}$, for some constant $\epsilon_1, \epsilon_2$.

We set $\tau =  \epsilon \log^{k} \Delta$, with the parameter $\epsilon$ to be determined. Similar to the proof of Theorem~\ref{thm:lb-det-linial}, the algorithm $\mathcal{A}'$  first produces shortened $\ID$ that are distinct for  vertices within distance $2\tau+2r$, and then simulates $\mathcal{A}$ on the shortened $\ID$ in $\tau$ rounds.

Let $G'=(V,E')$ be the graph $E' = \{ \{u,v\} | u,v \in V \textit{ and } \dist_{G}(u,v)\leq  2\tau+2r\}$.
The maximum degree in $G'$ is at most $\Delta^{2\tau+2r}$.
Each vertex $u\in V$ simulates $G'$ by collecting $N^{2\tau+2r}(u)$ in $O(\tau+r)$ time.

We simulate the algorithm of Theorem~\ref{cor:linial-coloring} on $G'$ by treating each of the $\ell$ bit IDs of vertices in $V$ as a color. This produces a $\beta \cdot \Delta^{4\tau+4r} $-coloring, which is equivalent to identifiers of length $\ell' =(4\tau+4r)\log \Delta + \log \beta$.
Although these identifiers are not globally unique, they are distinct in $N^{\tau+r}(u)$ for each vertex $u\in V$.
The time complexity of this process is $$(2\tau+2r)\cdot O\left(\log^\ast n - \log^\ast \Delta + 1\right).$$

Finally, we apply $\mathcal{A}$ on $G$ while implicitly assuming that the graph size is $2^{\ell'}$ and using the shorter $\ID$s. By setting $\epsilon$ as a large enough number such that $\epsilon_1 + \epsilon_2 \left(4 (\epsilon +r+\log \beta) \right)^{\frac{k}{k+1}} \leq \epsilon$, the runtime of this execution of $\mathcal{A}$ is:

\begin{align*}
\epsilon_1 \log^{k} \Delta + \epsilon_2 \left(\ell'\right)^{\frac{k}{k+1}}
& = \epsilon_1 \log^{k} \Delta + \epsilon_2 \left( (4\tau+4r)\log \Delta + \log \beta \right)^{\frac{k}{k+1}}  \\
& \leq  \epsilon_1 \log^{k} \Delta + \epsilon_2 \left( 4( \epsilon \log^k \Delta + r + \log \beta )\log \Delta \right)^{\frac{k}{k+1}} \\
& \leq  \epsilon_1 \log^{k} \Delta + \epsilon_2 \left( 4 (\epsilon + r+\log \beta) \log^{k+1} \Delta \right)^{\frac{k}{k+1}} \\
& = \left(\epsilon_1 + \epsilon_2 \left(4 (\epsilon +r+\log \beta) \right)^{\frac{k}{k+1}}\right) \log^{k} \Delta \\
& \leq \epsilon \log^{k} \Delta \\
& = \tau
\end{align*}

Whether the output labeling of $u \in V$ is legal depends on the labeling of the vertices in $N^r(u)$, which depends on the graph structure and the $\ID$s in $N^{\tau + r}(u)$.
Due to the hereditary property of the graph class under consideration, for each $u \in V$, $N^{\tau + r}(u)$ is isomorphic to a subgraph of some $2^{\ell'}$-vertex graph in the same class. Moreover, the shortened $\ID$ in $N^{\tau + r}(u)$ are distinct. Therefore, it is guaranteed that the output of the simulation is a legal labeling.

The total time complexity is at most

$$(2\tau+2r) \cdot O (\log^\ast n - \log^\ast \Delta + 1) + \tau = O(\log^{k} \Delta (\log^\ast n - \log^\ast \Delta + 1)).$$
\end{proof}

\paragraph{A note about MIS lower bounds.} Kuhn,  Moscibroda, and  Wattenhofer~\cite{KuhnMW04} showed that for a variety of problems (including MIS) there is a lower bound of $\min(\log \Delta/\log \log \Delta, \sqrt{\log n/\log \log n})$ rounds. The lower bound graph they used to prove such these result has
$\log \Delta/\log\log \Delta = O( \sqrt{\log n/\log \log n})$. By Theorem~\ref{thm:gen}, setting $k=1$ implies that if there is a deterministic algorithm for MIS that runs in
$O(\sqrt{\log n})$ time, then there is another deterministic algorithm running in $O(\log \Delta (\log^\ast n - \log^\ast \Delta + 1))$ time.
Interestingly, Barenboim, Elkin, Pettie, and Schneider~\cite{BEPS16} showed that an MIS algorithm in $\RandLOCAL$ running in
$O(\log^k \Delta + f(n))$-time implied another $\RandLOCAL$ algorithm running in $O(\log^k \lambda + \log^{1-\frac{1}{k+1}} n + f(n))$ time
on graphs of arboricity $\lambda$.

\section{Algorithms for $\Delta$-coloring Trees \label{sec.ub}}

In Section~\ref{sec.lb}, we showed that the problem of $\Delta$-coloring on trees has an $\Omega({\log_\Delta n})$ deterministic lower bound and an $\Omega({\log_\Delta \log n})$ randomized lower bound. These lower bounds have matching upper bounds by merely a $\log^\ast n$ additive term.

The algorithm of Barenboim and Elkin~\cite{BarenboimE10} demonstrates that the deterministic bound is essentially tight.
They proved that $\Delta$-coloring unoriented trees, where $\Delta\ge 3$, takes $O(\log_\Delta n + \log^* n)$ time.
This is actually a special case of their algorithm, which applies to graphs of bounded arboricity $\lambda$.

\begin{theorem} [\cite{BarenboimE10}] \label{thm:up-det}
For $q \geq 3$, there is a $\DetLOCAL$ algorithm for $q$-coloring trees in $O({\log_q n} + \log^\ast n)$ time,
independent of $\Delta$.
\end{theorem}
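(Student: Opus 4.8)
The plan is to turn the one‑line sequential argument ``a forest is $1$‑degenerate, hence greedily $3$‑colourable'' into a distributed algorithm via a \emph{shallow forests‑decomposition}. First I would compute an $H$‑partition of the tree: repeatedly delete every vertex whose current degree is at most $q-1$. Because a forest on $m$ vertices has fewer than $m$ edges, $\sum_v\deg(v)<2m$, so fewer than $2m/q$ vertices have degree $\ge q$; thus one peeling step deletes at least a $(1-2/q)$ fraction of the vertices and keeps the remaining graph a forest. Since $q\ge 3$ the shrink factor $2/q$ is bounded away from $1$, so after $\ell=O(\log_{q/2}n)=O(\log_q n)$ steps the forest is empty. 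This partitions $V$ into layers $V_1,\dots,V_\ell$ (a vertex learns its layer once all of its lower‑layer neighbours have disappeared, so the whole partition costs $O(\ell)$ rounds), and it has the crucial property that every $v\in V_i$ has at most $q-1$ neighbours in $V_i\cup V_{i+1}\cup\cdots\cup V_\ell$.

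Next I would colour the layers from the bottom up, $V_\ell,V_{\ell-1},\dots,V_1$, maintaining the invariant that $V_{\ge i}$ is properly $q$‑coloured once layer $i$ is done. When layer $i$ is reached, a vertex $v\in V_i$ has its already‑coloured neighbours only in $V_{>i}$ — call that set $F(v)$ with $a_v=|F(v)|$ — while its still‑uncoloured neighbours lie inside $V_i$ and number $b_v=\deg_{G[V_i]}(v)$, with $a_v+b_v\le q-1$. Colouring $V_i$ therefore reduces to list‑colouring the forest $G[V_i]$ where $v$ receives the list $\{1,\dots,q\}\setminus F(v)$ of size $q-a_v\ge b_v+1=\deg_{G[V_i]}(v)+1$; such a colouring always exists by $1$‑degeneracy. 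Note that $G[V_i]$ — indeed the forest $H^\star=\bigcup_i G[V_i]$ of all intra‑layer edges — has maximum degree at most $q-1$, so \emph{all dependence on $\Delta$ has already been eliminated} and only a dependence on $q$ survives.

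To make this fast and to pay Linial‑type overhead only once, I would run a single invocation of Theorem~\ref{cor:linial-coloring} on $H^\star$, obtaining a proper $O(q^2)$‑colouring of $H^\star$ in $O(\log^\ast n-\log^\ast q+1)$ rounds, and refine each $V_i$ into its $O(q^2)$ resulting colour classes, each an independent set. Then I would sweep the layers bottom‑up but process the $O(q^2)$ independent sub‑classes of a layer one at a time: a vertex $v$ in a sub‑class must avoid only the colours of its $\le q-1$ neighbours in $V_{\ge i}$ (those in $V_{>i}$ are finished, those in $V_i$ lie in earlier sub‑classes), so it has at least $q-(q-1)=1$ free colour and picks one in a single round; properness is preserved because each later neighbour in turn avoids $v$. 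For $q=O(1)$ — in particular the headline case $q=3$, where $G[V_i]$ is merely a disjoint union of paths — the number of sub‑classes is a constant and the total cost is $O(\log_q n)+O(\log^\ast n)+O(1)\cdot O(\log_q n)=O(\log_q n+\log^\ast n)$, matching the claim.

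The main obstacle is the general‑$q$ accounting: the sweep above costs $O(q^2)$ rounds per macro‑layer, i.e.\ $O(q^2\log_q n)$ in total, which — although independent of $\Delta$ — carries a $\mathrm{poly}(q)$ overhead that is far too large when $q$ is, say, polynomial in $n$ (where the target bound degenerates to $O(\log^\ast n)$). Removing this factor is the real work: one must $(\deg+1)$‑list‑colour the bounded‑degree forests $G[V_i]$ in $O(\log^\ast n)$ amortised rounds rather than $O(q^2)$, which forces one to exploit that $G[V_i]$ is itself a forest a second time — e.g.\ by recursing the forests‑decomposition one level down inside each $V_i$, or via a dedicated fast $(\deg+1)$‑list‑colouring procedure for forests — so that the per‑layer cost collapses and a genuinely $q$‑uniform $O(\log_q n+\log^\ast n)$ bound emerges.
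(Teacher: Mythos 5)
The paper does not prove Theorem~\ref{thm:up-det}; it is cited verbatim from Barenboim and Elkin~\cite{BarenboimE10}, so there is no in-paper argument to compare against. Your reconstruction does capture the right structural skeleton of the Barenboim--Elkin algorithm: the $H$-partition obtained by repeatedly peeling vertices of degree at most $q-1$ (and your accounting is correct---a forest on $m$ vertices has $\sum_v \deg(v) < 2m$, so fewer than $2m/q$ survivors per step, giving $O(\log_{q/2} n) = O(\log_q n)$ layers, each $v \in V_i$ having at most $q-1$ neighbours in $V_{\geq i}$), followed by a bottom-up pass in which each intra-layer forest $G[V_i]$ is $(\deg+1)$-list-coloured. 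That is indeed the BE10 blueprint, and it correctly eliminates all dependence on $\Delta$.

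The gap you flagged is genuine, and it is the crux of the theorem. Sweeping the $O(q^2)$ Linial colour classes of $H^\star$ one at a time inside each of the $\Theta(\log_q n)$ layers costs $O(q^2 \log_q n)$, which exceeds the stated $O(\log_q n + \log^\ast n)$ for every superconstant $q$---and the paper really does need superconstant $q$: it invokes Theorem~\ref{thm:up-det} with $q = \Delta$ (the deterministic upper bound for $\Delta$-colouring trees) and with $q = \sqrt{\Delta}$ (Phase~2 of Theorem~\ref{thm:up-rnd}), where $\Delta$ can grow with $n$. So a $\mathrm{poly}(q)$ multiplicative overhead is not acceptable here, and your observation that one must $(\deg+1)$-list-colour the intra-layer forests with only $O(\log^\ast n)$ \emph{amortised} cost, exploiting their forest structure a second time, is exactly where the real technical work of~\cite{BarenboimE10} lives. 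Your sketch stops there; the proposal is therefore an accurate reconstruction of the approach and an honest identification of the missing step, but not a complete proof.
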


Pettie and Su~\cite{PettieS15} gave randomized algorithms for $(4+o(1))\Delta/\ln\Delta$-coloring triangle-free graphs.  Their algorithm makes extensive
use of the distributed \Lovasz{} local lemma~\cite{ChungPS14} and runs in $\Omega(\log n)$ time.  Pettie and Su sketched a proof that $\Delta$-coloring trees
takes $O({\log_\Delta \log n} + \log^\ast n)$ time, at least for sufficiently large $\Delta$.

\begin{theorem} [\cite{PettieS15}] \label{thm:up-rnd}
There exists a large constant $\Delta_0$ such that when $\Delta \geq \Delta_0$, there is a $\RandLOCAL$ algorithm
for $\Delta$-coloring  trees in $O({\log_\Delta \log n} + \log^\ast n)$ time.
\end{theorem}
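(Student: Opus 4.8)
The plan is to apply the graph shattering paradigm of~\cite{BEPS16}. A randomized \emph{pre-shattering} phase produces a partial proper $\Delta$-coloring, and a deterministic \emph{post-shattering} phase completes it by running the Barenboim--Elkin tree-coloring algorithm (Theorem~\ref{thm:up-det}) independently on each connected component of still-uncolored vertices. For the composition to yield the claimed bound I need: (i) the partial coloring is \emph{extendable} to a proper $\Delta$-coloring of the whole tree; (ii) with probability $1-1/\poly(n)$ every component $C$ of uncolored vertices has only $\poly(\Delta)\cdot\log_\Delta n$ vertices; and (iii) a list-coloring variant of Theorem~\ref{thm:up-det} that, given a tree plus a list assignment admitting a proper coloring, finds one in $O(\log_\Delta N+\log^\ast N)$ rounds, $N$ being the number of vertices (crucially the running time is logarithmic in $N$ with base $\Delta$, not with base the minimum list size, which can be as small as $1$). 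Granting these, treating each $C$ as the $\Delta$-list-coloring instance in which $v\in C$ has list $[\Delta]$ minus the colors of its already-colored neighbors, the post-shattering phase runs in $O(\log_\Delta(\poly(\Delta)\log_\Delta n)+\log^\ast n)=O(\log_\Delta\log n+\log^\ast n)$ rounds, with no interference between components since each list already encodes all constraints from outside $C$.

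The substance is the pre-shattering phase, for which I would iterate a \emph{random-color-and-keep} step: each round, every still-uncolored $v$ draws a uniform random color from its current list and commits permanently iff no uncolored neighbor drew the same color, after which lists are updated. Arranging extendability (item (i)) is where trees bite: a tree whose lists have size exactly the vertex degrees need not be list-colorable (smallest obstruction: $P_3$ with lists $\{1\},\{1,2\},\{2\}$), yet here the residual list of a degree-$\Delta$ vertex can shrink to exactly its residual degree. The remedy is to keep in every uncolored component a vertex whose residual list \emph{strictly} exceeds its residual degree --- by forbidding the last uncolored neighbor of a degree-$\Delta$ vertex from committing, or by reserving a sparse ``anchor'' set of never-committing vertices (placed cheaply via Linial's $O(\Delta^2)$-coloring, Theorem~\ref{cor:linial-coloring}), using that $\Delta\ge\Delta_0$ is large and that trees abound in low-degree vertices; alternatively, since surviving components are small anyway, one can in $O(1)$ extra rounds un-color a constant number of colored vertices next to each bad component to inject the needed slack while enlarging it by only $O(1)$. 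For item (ii), a Luby-style martingale analysis shows that, for $\Delta$ large enough, a vertex's uncolored degree shrinks by a constant factor each round while its unit of slack survives, so after enough rounds each vertex stays uncolored with probability at most $\Delta^{-c}$ for any desired constant $c$; feeding this into the shattering lemma of~\cite{BEPS16} --- whose hypothesis is exactly a $\Delta^{-\Omega(1)}$ per-vertex failure probability depending only on $O(1)$-radius randomness --- gives the $\poly(\Delta)\log_\Delta n$ bound on component sizes. An alternative, closer to the sketch in~\cite{PettieS15}, casts ``$v$ fails'' as a bad event whose probability lies far below the inverse of its dependency degree $d=\poly(\Delta)$ --- via a suitably wasteful random experiment --- and runs the distributed \Lovasz{} local lemma algorithm of~\cite{ChungPS14} in its shattering form.

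I expect the main obstacle to be running the pre-shattering phase within the allotted $O(\log_\Delta\log n+\log^\ast n)$ rounds \emph{while} maintaining extendability. The extendability bookkeeping (anchors, or local un-coloring) must neither destroy the $\Delta^{-c}$ failure bound nor inflate component sizes. Moreover the round budget is tight when $\Delta$ is large: driving the per-vertex failure probability down to $\Delta^{-c}$ by naive parallel coloring costs $\Theta(\log\Delta)$ rounds, which exceeds $O(\log_\Delta\log n+\log^\ast n)$ once $\Delta$ grows past roughly $2^{\sqrt{\log\log n}}$, because the no-slack degree-$\Delta$ vertices commit with only constant probability per round. For constant $\Delta$ this difficulty disappears and the argument above is complete; covering larger $\Delta$ appears to need either a per-round reduction that genuinely exploits the large palette or a separate direct $O(\log^\ast n)$-time argument for large-$\Delta$ trees. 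Supplying and analyzing the base-$\Delta$ list-coloring variant of Theorem~\ref{thm:up-det} is a further routine but necessary step.
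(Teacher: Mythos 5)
Your high-level plan --- randomized pre-shattering followed by a deterministic finish on small residual components --- is indeed the paper's strategy, but you explicitly flag and do not fill the gap that makes the theorem nontrivial, and you miss a second trick that dissolves your extendability worry. On the first point: you correctly observe that single-color-per-round Luby sampling needs $\Theta(\log\Delta)$ rounds to push per-vertex failure to $\Delta^{-c}$, which overshoots $O(\log_\Delta\log n+\log^*n)$ once $\Delta$ is super-constant, and you say the fix ``appears to need a per-round reduction that genuinely exploits the large palette.'' This is exactly what the paper's {\sf ColorBidding} supplies and what you are missing: each vertex samples a random \emph{subset} $S_v$ of its palette, including each color independently with probability $c_i/|\Psi_i(v)|$, and commits if some color in $S_v$ is absent from $\bigcup_{u\in N_i(v)}S_u$; the parameter $c_i$ grows as a tower via $c_{i+1}=\min\{\Delta^{0.1},\,c_i\exp(\Omega(c_i))\}$, so the uncolored degree drops by an ever-larger factor each round while the palette remains $\Omega(\Delta)$, giving termination after $O(\log^*\Delta)$ rounds with per-vertex failure $\exp(-\poly(\Delta))$. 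Since the theorem claims \emph{all} $\Delta\ge\Delta_0$ (not just constant $\Delta$), this multi-color bidding is essential, and your proposal acknowledges it is incomplete without it.

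On the second point: you devote most of your effort to patching extendability of the partial coloring (anchors, local un-coloring), and you rightly worry that this bookkeeping may corrupt the shattering bound. The paper avoids the issue entirely by \emph{reserving} $\sqrt\Delta$ colors: Phase~1 colors only from $\{1,\ldots,\Delta-\sqrt\Delta\}$, which is harmless because {\sf ColorBidding} only ever needs palettes of size $\Omega(\Delta)$. After shattering, every uncolored vertex has all $\sqrt\Delta$ reserved colors free, so Phase~2 is an unconstrained $\sqrt\Delta$-coloring of each bad component (of size $\le\Delta^4\log n$) via Theorem~\ref{thm:up-det}, taking $O(\log_{\sqrt\Delta}(\Delta^4\log n)+\log^*n)=O(\log_\Delta\log n+\log^*n)$ rounds with no list-coloring, no extendability argument, and no cross-component interference. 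In short, your outline matches the paper's, but both technical engines --- the tower-rate subset sampling for Phase~1 and the reserved-colors device for Phase~2 --- are absent, and the first is a gap you yourself identify as unresolved.
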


The nature of the proof of Theorem~\ref{thm:up-rnd} makes it difficult to calculate a specific $\Delta_0$ for which the theorem applies. Moreover, the proof is only sketched, hidden inside more complicated ideas.
We address both of these issues. Firstly, we provide a simple algorithm and elementary proof of Theorem~\ref{thm:up-rnd}. Secondly, we prove Theorem~\ref{thm:coloring-trees-constant-Delta},
which combines Theorem~\ref{thm:up-rnd} with a new technique for constant $\Delta \ge 55$, thereby providing a randomized algorithm for $\Delta$-coloring a tree that runs in $O(\log_\Delta \log n + \log^\ast n)$ time for any constant $\Delta\ge 55$.

\subsection{A simple proof of Theorem~\ref{thm:up-rnd}.}
For a graph $G=(V,E)$ we say that a subset $S \subseteq V$ is a {\em distance-$k$} set if the following two conditions are met:
\begin{enumerate}
\item For any two distinct vertices  $u,v \in S$, we have $u \notin N^{k-1}(v)$.
\item Let $G^k=(V,E^k)$, where there is an edge $(u,v)\in E^k$ if and only if $\dist_G(u,v)= k$. Then $S$ is connected in $G^k$.
\end{enumerate}
We make use of the following lemmas in the proof of Theorem~\ref{thm:up-rnd}.
While the proof of this lemma is implicit in~\cite{BEPS16} we reproduce it here
for sake of clarity.

\begin{lemma}[\cite{BEPS16}] \label{lem:shatter}
The number of distinct distance-$k$ sets of size $t$ is less than $4^t \cdot n \cdot \Delta^{k(t-1)}$.
\end{lemma}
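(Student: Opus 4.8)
The plan is to bound the number of distance-$k$ sets of size $t$ by a counting argument that exploits the connectivity of $S$ in $G^k$. First I would fix a vertex of $S$ as an "anchor": there are at most $n$ choices for this first vertex $v_1 \in S$. Since $S$ is connected in $G^k$, I would then build up $S$ by repeatedly adding a vertex that is adjacent (in $G^k$) to the already-chosen portion. Concretely, think of a spanning tree $T$ of the induced subgraph $G^k[S]$; process the vertices of $S$ in an order $v_1, v_2, \ldots, v_t$ consistent with a traversal of $T$, so that each $v_i$ (for $i \ge 2$) is at distance exactly $k$ in $G$ from some earlier vertex $v_{j(i)}$ with $j(i) < i$. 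Each new vertex $v_i$ can then be specified by (a) naming its parent $v_{j(i)}$ among the $\le i-1$ previously chosen vertices, and (b) naming $v_i$ itself among the $\le \Delta^k$ vertices at distance $k$ from that parent. This gives a crude bound, but the product of the $(i-1)$ factors is $t!$, which is too large; to fix this I would instead encode the tree structure separately.

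The cleaner accounting is as follows. The shape of the traversal of $T$ — i.e., for each step, whether we descend to a new child or backtrack — is a walk on $T$ that can be encoded as a balanced-parenthesis-type string of length $2(t-1)$ over a binary alphabet, contributing a factor of at most $4^{t-1} < 4^t$ (equivalently, the number of rooted trees on $t$ labeled-by-order vertices is at most $4^{t-1}$ by the Catalan bound $C_{t-1} \le 4^{t-1}$). Given this combinatorial shape, each of the $t-1$ non-root vertices is determined once we know which of the $\le \Delta^k$ vertices at distance exactly $k$ from its (already-determined) parent it is, contributing a factor of $\Delta^{k(t-1)}$. Multiplying: at most $n$ choices for the anchor, times $4^t$ for the tree shape, times $\Delta^{k(t-1)}$ for the geometric embedding of the remaining $t-1$ vertices, yields the claimed bound $4^t \cdot n \cdot \Delta^{k(t-1)}$. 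Note property~1 of a distance-$k$ set is not needed for the upper bound — it only makes such sets scarcer — while property~2 (connectivity in $G^k$) is exactly what licenses the incremental "grow from a parent" construction.

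The main obstacle, and the only genuinely delicate point, is making sure the encoding is injective, i.e., that every distance-$k$ set $S$ of size $t$ is produced by at least one (anchor, tree-shape, child-choices) triple, and that the counting of tree shapes is not off. I would handle this by fixing, for each $S$, a canonical spanning tree of $G^k[S]$ (say, a BFS tree from the lexicographically smallest vertex, which becomes the anchor), and a canonical DFS traversal order of that tree; this makes the map $S \mapsto (\text{anchor}, \text{shape}, \text{child-choices})$ well-defined and injective, so the count of triples dominates the count of sets. The Catalan/parenthesis bound $C_{t-1} \le 4^{t-1}$ is standard, so I would just cite it; and since $\Delta^k$ is an overcount of $|\{u : \dist_G(u, v_{j(i)}) = k\}|$, no further care is needed there. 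Everything else is routine multiplication.
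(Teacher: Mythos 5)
Your proposal is correct and matches the paper's argument: the paper also counts by choosing an unlabeled tree shape on $t$ vertices (fewer than $4^t$, which is the Catalan bound you cite) and then embedding it into $G^k$ by picking an anchor ($n$ choices) and, for each remaining vertex, one of at most $\Delta^k$ vertices at $G$-distance $k$ from its parent. The paper simply states this more tersely; your injectivity discussion via a canonical spanning tree and DFS order fills in a detail the paper leaves implicit.
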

\begin{proof}
A distance-$k$ set is spanned by a tree in $G^k$. There are less than $4^t$ distinct unlabeled trees of $t$ vertices, and there are less than
$n \Delta^{k(t-1)}$ ways to embed a $t$-vertex tree in $G^{k}$. The lemma follows since there is an injective mapping from the family of distance-$k$ sets of size $t$ to
subtrees of $t$ vertices in $G^k$.
\end{proof}

\begin{lemma} [Chernoff bound]\label{lem:chernoff}
Let  $X$ be the sum of $n$ i.i.d. random 0/1 variables. For any $0<\delta<1$, we have:
\begin{align*}
\text{For } 0<\delta<1, & \hspace{0.4cm} \mathrm{Pr}[X \leq (1-\delta)\mathbb{E}[X]] < \exp\left(-\delta^2 \mathbb{E}[X] / 2\right).\\
\text{For } 0<\delta<1, & \hspace{0.4cm} \mathrm{Pr}[X \geq (1+\delta)\mathbb{E}[X]] < \exp\left(-\delta^2 \mathbb{E}[X] / 3\right).\\
\text{For } \delta \geq 1, & \hspace{0.4cm} \mathrm{Pr}[X \geq (1+\delta)\mathbb{E}[X]] < \exp\left(-\delta\mathbb{E}[X] / 3\right).\\
\end{align*}
\end{lemma}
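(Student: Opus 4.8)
The plan is to use the standard exponential-moment (Chernoff/Bernstein) method. Write $X = \sum_{i=1}^n X_i$ with the $X_i$ independent $0/1$ variables, let $\mu = \mathbb{E}[X]$, and for a real parameter $s$ apply Markov's inequality to $e^{sX}$. By independence $\mathbb{E}[e^{sX}] = \prod_{i=1}^n \mathbb{E}[e^{sX_i}]$, and since $\mathbb{E}[e^{sX_i}] = 1 + p_i(e^s-1) \le \exp\!\big(p_i(e^s-1)\big)$ (where $p_i = \mathbb{E}[X_i]$ and we used $1+x\le e^x$), we obtain $\mathbb{E}[e^{sX}] \le \exp\!\big(\mu(e^s-1)\big)$ for every real $s$. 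This single estimate drives all three inequalities.

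For the two upper tails I would take $s>0$ and write $\mathrm{Pr}[X \ge (1+\delta)\mu] \le \mathbb{E}[e^{sX}]\, e^{-s(1+\delta)\mu} \le \exp\!\big(\mu(e^s - 1 - s(1+\delta))\big)$; choosing the minimizer $s = \ln(1+\delta)$ yields $\mathrm{Pr}[X \ge (1+\delta)\mu] \le \big(e^\delta/(1+\delta)^{1+\delta}\big)^{\mu}$. For the lower tail I would instead apply Markov to $e^{-sX}$ with $s>0$, optimize at $s = -\ln(1-\delta)$, and get $\mathrm{Pr}[X \le (1-\delta)\mu] \le \big(e^{-\delta}/(1-\delta)^{1-\delta}\big)^{\mu}$.

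It then remains to simplify these sharp-but-unwieldy bounds into the stated forms, which is elementary calculus. For the lower tail I would expand $\ln(1-\delta) = -\sum_{k\ge 1}\delta^k/k$ and check that $-\delta - (1-\delta)\ln(1-\delta) \le -\delta^2/2$ on $(0,1)$, the leftover series having nonnegative coefficients. For the upper tails I would study $g(\delta) = \delta - (1+\delta)\ln(1+\delta)$: expanding $\ln(1+\delta)$ and a short term-by-term comparison gives $g(\delta) \le -\delta^2/3$ for $0<\delta<1$, while for $\delta \ge 1$ one checks directly that $(1+\delta)\ln(1+\delta) \ge \tfrac{4}{3}\delta$ (it holds at $\delta=1$ since $2\ln 2 > 4/3$, and the gap is increasing in $\delta$), i.e.\ $g(\delta) \le -\delta/3$. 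Substituting into the bounds of the previous paragraph and raising to the power $\mu$ yields exactly the three claimed inequalities.

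The probabilistic content here is entirely routine; the only genuine obstacle is the last step of calculus bookkeeping. I would pay particular attention to the endpoint $\delta \to 1^{-}$ in the lower-tail estimate and the crossover $\delta = 1$ in the upper-tail estimate, since those are the regimes where the simple polynomial bounds are least slack.
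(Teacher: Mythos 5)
The paper does not prove this lemma: it is stated as a standard, well-known Chernoff bound and used off-the-shelf in the Appendix (proofs of Claims~\ref{clm-1}--\ref{clm-3}), so there is no ``paper proof'' to compare against. Your proof is the standard exponential-moment (Chernoff) argument, and it is correct. The moment bound $\mathbb{E}[e^{sX}] \le \exp(\mu(e^s-1))$ and the optimal choices $s = \ln(1+\delta)$, $s = -\ln(1-\delta)$ give the usual forms $(e^\delta/(1+\delta)^{1+\delta})^\mu$ and $(e^{-\delta}/(1-\delta)^{1-\delta})^\mu$, and your calculus reductions check out: for the lower tail, $-\delta - (1-\delta)\ln(1-\delta) = -\sum_{k\ge 2}\delta^k/(k(k-1)) \le -\delta^2/2$; for the upper tail with $0<\delta<1$, the alternating-series comparison $\delta - (1+\delta)\ln(1+\delta) \le -\delta^2/2 + \delta^3/6 \le -\delta^2/3$ works; and for $\delta \ge 1$ the monotonicity of $(1+\delta)\ln(1+\delta) - \tfrac{4}{3}\delta$ together with the base case $2\ln 2 > 4/3$ finishes it. Two minor notes: your derivation actually proves the more general Poisson-trials version (independent but not necessarily identically distributed $0/1$ variables), which is stronger than the i.i.d.\ statement in the lemma; and the chain as written gives $\le$ rather than the strict $<$ in the statement, but strictness follows from the strict inequalities in the calculus step whenever $\mu>0$, which is the only case of interest.
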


\begin{proof}[Proof of Theorem~\ref{thm:up-rnd}]
Our algorithm has two phases. The first phase, which takes $t=O(\log^\ast \Delta)$ rounds, partially colors the graph using colors in $\{1,2, \ldots, \Delta - \sqrt{\Delta}\}$. The second phase, which takes $O(\log_\Delta \log n + \log^\ast n)$ rounds, applies a deterministic algorithm to $\sqrt{\Delta}$-color the remaining uncolored vertices using  colors in $\{\Delta - \sqrt{\Delta}+1, \ldots, \Delta\}$.
We assume throughout the proof that $\Delta$ is at least a large enough constant.

\paragraph{Phase 1.}
The first phase of the algorithm takes $t = O(\log^\ast \Delta)$  rounds. In each round, the algorithm attempts to color some uncolored vertices. We will explain soon how uncolored vertices decide if they participate in a given round.
In the beginning of round $i$, for each vertex $v\in V$, let $\Psi_i(v)$ denote $v$'s available palette (i.e. the set of colors that $v$ can choose in round $i$), and let $N_i(v)$ denote the set of uncolored vertices adjacent to $v$ that are trying to color themselves in this round.
Initially, we set $N_1(v) = N(v)$, and $\Psi_1(v)=\{1,2, \ldots, \Delta - \sqrt{\Delta}\}$, for all $v$.
That is, in the first round all vertices attempt to color themselves, and they all have the full palette of this phase available for choices of a color.

We maintain the following two properties for each vertex $v$ that is attempting to color itself at round $i$:
\begin{itemize}
\item {\bf Large Palette Property.} $\mathcal{P}_1(v): |\Psi_{i}(v)| \geq \frac{\Delta}{200}$.
\item {\bf Small Degree Property.} $\mathcal{P}_2(v): |N_i(v)| \leq \frac{\Delta}{c_i}$, where $c_i$ is defined as: $c_1 = 1$, $c_2 = 1-\frac{1}{200}$, and $c_{i} = \min\left\{ \Delta^{0.1}, \; c_{i-1} \cdot \exp\paren{\frac{c_{i-1}}{3\cdot 200 \cdot e^{200}}}\right\}$ for $i > 2$.
\end{itemize}
Notice that $c_i$ is a constant, for all $i$.
Let $t$ be the smallest number $i$ such that $c_i = \Delta^{0.1}$. Notice that $t=O(\log^\ast \Delta)$.

The intuition behind the two properties $\mathcal{P}_1(v)$ and $ \mathcal{P}_2(v)$ is that they ensure that (i)  participating vertices  always have a large enough palette to use, and (ii) there is a large separation between the palette size and the degree so that we can color a large fraction of vertices in each round.

For each $1\leq i \leq t$, the $i^{\text{th}}$ round consists of two constant time sub-routines {\sf ColorBidding($i$)} and {\sf Filtering($i$)}.
In {\sf ColorBidding($i$)}, each participating vertex $v$ selects a random subset of colors $S_v$.
If there is a color in $S_v$ that does not belong to $\bigcup_{u \in N_i(v)} S_u$, the vertex $v$ {\em succeeds} and colors itself with any such color. If such a color is chosen, denote it by $\mathrm{Color} (v) $.
After {\sf ColorBidding($i$)}, we execute {\sf Filtering($i$)} which filters out some vertices and thereby  preventing $\mathcal{P}_1$ and $\mathcal{P}_2$ from being violated.
Such vertices are called {\em bad} vertices, and they will no longer participate in the remaining rounds of Phase 1.

\begin{framed}
{\sf ColorBidding($i$).}

Do the following steps in parallel for each uncolored vertex $v$ that is not bad:
\begin{enumerate}
\item If $c_i = 1$, then choose one color uniformly at random from $\Psi_i(v)$, and $S_{v}$ contains only this color.
    Otherwise ($c_i > 1$), construct the set $S_{v}$ by independently including each color of $\Psi_i(v)$ with probability $\frac{c_i}{|\Psi_i(v)|}$.

\item If $S_{v} \setminus \bigcup_{u \in N_i(v)} S_u \neq \emptyset$, then permanently color $v$ by picking an arbitrary color in $S_{v} \setminus \bigcup_{u \in N_i(v)} S_u$ for $\mathrm{Color} (v)$.
\item $\Psi_{i+1}(v) \leftarrow \Psi_{i}(v) \setminus \{\mathrm{Color}(u) \;|\; u \in N_i(v) \text{ is permanently colored}\}$.
\end{enumerate}
\end{framed}

We define $N_i'(v)$ as the set of participating vertices {\em after} {\sf ColorBidding($i-1$)} and {\em before} {\sf Filtering($i-1$)} that are adjacent to $v$. In other words, $N_i'(v) = N_{i-1}(v) \setminus \{u | u$ is permanently colored in {\sf ColorBidding($i-1$)}$\}$.

\begin{framed}
{\sf Filtering($i$).}

For each uncolored vertex $v$ that is not bad:
\begin{enumerate}
\item If $i = 1$ and $|\Psi_2(v)| - |N_2'(v)| < \frac{\Delta}{200}$, then mark $v$ as a bad vertex.
\item If $1 < i < t$ and $|N_{i+1}'(v)| > \frac{\Delta}{c_{i+1}}$, then mark $v$ as a bad vertex.
\item If $i = t$ then mark $v$ as a bad vertex.
\end{enumerate}
\end{framed}

\paragraph{Phase 2.} By the filtering rule for $i = t$, all the remaining uncolored vertices after the Phase 1 are bad vertices. We color the bad vertices in Phase 2.
We will later prove  that after phase 1, with high probability a connected component induced by bad vertices has size at most $\Delta^4 \log n$.
Hence we use Theorem~\ref{thm:up-det} to $\sqrt{\Delta}$-color such connected components using the $\sqrt{\Delta}$ reserved colors. For simplicity, if this phase lasts for too long (which may happen with low probability) the algorithm just stops and fails.

\paragraph{Runtime.}
The runtime of phase 1 is $t=O(\log^\ast \Delta)$ rounds.
The runtime of phase 2 is
$O\big({\log_{\sqrt{\Delta}}\left(\Delta^{4}\log n\right)}$ + $\log^\ast\left(\Delta^{4}\log n\right)\big) = O\left({\log_\Delta \log n} + \log^\ast n\right).$
Thus, the total runtime is $O\left({\log_\Delta \log n} + \log^\ast n\right)$ rounds.

\paragraph{Analysis.}
The analysis of phase 2 relies only on proving that, with high probability, all of the connected components induced by bad vertices after phase 1 are of size at most $\Delta^4 \log n$. So we focus on analyzing phase 1.

The decision of whether a vertex $v$ that participates in round $i$ becomes marked as bad in this round depends on the vertices in $N^2(v)$ that participate in this round and the random bits used by these vertices during this round. Our analysis will work for any such arbitrary setting (of participating vertices in $N^2(v)$ and their random bits).
In particular, we prove that in any round of Phase 1, for any vertex $v$ participating in this round and any arbitrary choice of vertices in $N^2(v)$ that participate in this round with their random bits (that are used in this round), $v$ becomes a bad vertex with probability at most $\exp(-\text{poly}(\Delta)$. Such a proof means that the choice of whether $v$ becomes bad or not does not depend on any arbitrary behaviour of vertices participating in this round that are not in $N^2(v)$.
This proof is covered by the following claims, whose proofs are given in the Appendix (we partition the cases since each case requires a different proof).
\begin{claim}\label{clm-1}
The probability that a vertex $v$ is marked as bad in round $i=1$ is at most $\exp(-\Omega(\Delta))$, regardless of the random bits used by vertices in $N^2(v)$.
\end{claim}
\begin{claim}\label{clm-2}
The probability that a vertex $v$ that participates in round $1<i<t$ is marked as bad in round $i$ is at most $\exp(-\Omega(\Delta^{0.1}))$, regardless of the random bits used by vertices in $N^2(v)$.
\end{claim}
\begin{claim}\label{clm-3}
The probability that a vertex $v$ that participates in round $i=t$ is marked as bad in round $i$ is at most $\exp(-\Omega(\Delta^{0.1}))$, regardless of the random bits used by vertices in $N^2(v)$.
\end{claim}

By the union bound for all rounds in Phase 1, the probability that any vertex $v$ becomes a bad vertex after Phase 1 is $O(\log^\ast \Delta) \exp(-\text{poly}(\Delta)) = \exp(-\text{poly}(\Delta))$,
regardless of the choice of random bits for all vertices not in $N^2(v)$.
Therefore, just before Phase 2, for any distant-$5$ set $T$ of size $s$, the probability that all vertices in $T$ are bad is at most $\exp(-s \cdot\text{poly}(\Delta))$.
By Lemma~\ref{lem:shatter}, there are at most $4^s \cdot n \cdot \Delta^{4(s-1)}$ distinct distant-$5$ set $T$ of size $s$.
By the union bound, with probability at least $\left( 4^s \cdot n \cdot \Delta^{4(s-1)}\right) \cdot \exp(-s \cdot\text{poly}(\Delta))$, there is no distant-$5$ set of size $s$ that contains only bad vertices.  This probability can be upper bounded by $n^{-c}$ for any $c$ when $s = \log n$.

This concludes the proof of Theorem~\ref{thm:up-rnd}.
\end{proof}

\subsection{Algorithm for $\Delta \ge 55$.}
Inherently, the above proof  (and also the proof in \cite{PettieS15}) is hard to analyze quantitatively without the aid of $O(\cdot)$ notation. It seems to require a very large $\Delta$   for the proof to work well, since in each round several Chernoff bounds are applied to make sure that some requirements are met (see the proof of Claims~\ref{clm-1},~\ref{clm-2}, and~\ref{clm-3} in the Appendix), and we need a large enough $\Delta$ to make these Chernoff bounds work.
In what follows we present a different algorithm with a significantly simpler analysis for $\Delta$-coloring trees.

\begin{theorem}\label{thm:coloring-trees-constant-Delta}
For $\Delta \geq 55$, there exists a $\RandLOCAL$ algorithm $\Delta$-coloring of a tree can be computed in $O({\log_\Delta \log n} + \log^\ast n)$ time.
\end{theorem}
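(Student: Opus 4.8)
\textbf{Proof plan for Theorem~\ref{thm:coloring-trees-constant-Delta}.}
The plan is to follow the same two-phase ``shatter-then-finish'' template used in the proof of Theorem~\ref{thm:up-rnd}, but to replace the delicate iterated color-bidding of Phase~1 with something crude enough that the constants can be tracked by hand down to $\Delta=55$. The key observation is that in a tree every vertex has at most one neighbor on the path toward any fixed root, so we can orient the tree and give each vertex a well-defined parent; a vertex then only needs to avoid the (at most one) color of its parent, which makes $\Delta$-coloring dramatically easier than general-graph $\Delta$-coloring. Concretely, I would first spend $O(\log^\ast n)$ rounds computing, via Theorem~\ref{cor:linial-coloring}, an $O(\Delta^2)$-coloring of the tree, and then use that coloring (in $O(\log^\ast \Delta)$ further local rounds of the standard Linial reduction, or a single pass over color classes) to extract a large independent set / partial coloring: each vertex picks a uniformly random color from its palette of $\Delta$ colors and keeps it if it conflicts with no neighbor, so with the tree structure each vertex succeeds with constant probability bounded away from $0$ even for $\Delta$ as small as $55$ (the relevant inequality is roughly $(1-1/\Delta)^{\Delta-1} \ge e^{-1}(1-o(1))$, which one checks numerically at $\Delta = 55$). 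Running this a constant number of times colors all but an $\exp(-\Omega(\Delta))$-fraction of vertices in the appropriate local sense.

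The heart of Phase~1 is the \emph{shattering} argument, identical in spirit to the end of the proof of Theorem~\ref{thm:up-rnd}: I would argue that for each vertex $v$ the event ``$v$ survives all $O(1)$ bidding rounds uncolored'' depends only on the random bits in a constant-radius ball $N^{O(1)}(v)$ and occurs with probability $q \le \exp(-\Omega(\Delta))$, with the implied constant explicit enough that $4^s n \Delta^{O(s)} q^s < n^{-c}$ once $s \ge c' \log n$; this is exactly where $\Delta \ge 55$ is needed, since we must beat the $4 \cdot \Delta^{O(1)}$ branching factor from Lemma~\ref{lem:shatter}. Then, with high probability, every connected component of uncolored (``bad'') vertices has size $O(\Delta^{O(1)} \log n) = \mathrm{poly}(\Delta)\cdot\log n$. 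The subtlety specific to the small-$\Delta$ regime is that we cannot afford to reserve $\sqrt{\Delta}$ colors the way Phase~2 of Theorem~\ref{thm:up-rnd} does when $\Delta$ is constant; instead I would reserve only a constant number of colors (say $2$ or $3$), which still suffices because a small-component \emph{subtree} of a tree is itself a tree and can be $3$-colored, and in fact a tree piece with bounded degree that is not $2$-regular can be properly $\Delta$-colored for any $\Delta\ge3$. So Phase~2 runs Barenboim--Elkin (Theorem~\ref{thm:up-det}) on each component of size $N' = \mathrm{poly}(\Delta)\log n$ in $O(\log_{\Delta} N' + \log^\ast N') = O(\log_\Delta\log n + \log^\ast n)$ time, using the handful of reserved colors, and glues the partial colorings together.

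For the runtime accounting: Phase~0 (Linial coloring) costs $O(\log^\ast n - \log^\ast\Delta + 1)$ rounds; Phase~1 costs $O(\log^\ast \Delta)$ rounds (a constant number of bidding passes, each preceded by an $O(\log^\ast\Delta)$ local recoloring, or just $O(1)$ passes if we are willing to be generous with constants); Phase~2 costs $O(\log_\Delta\log n + \log^\ast n)$ rounds; the total is $O(\log_\Delta\log n + \log^\ast n)$ as claimed, and correctness holds with probability $1 - 1/\mathrm{poly}(n)$ (on failure the algorithm halts, which is allowed). The main obstacle I anticipate is \emph{not} the architecture --- that is standard graph shattering --- but rather making the per-vertex survival probability genuinely small enough, with fully explicit constants, that the Lemma~\ref{lem:shatter} union bound closes at $\Delta = 55$ rather than at some large unspecified $\Delta_0$; this means choosing the bidding subroutine and the ``bad vertex'' definition very carefully so that a single clean Chernoff/union estimate (rather than the cascade of estimates in Claims~\ref{clm-1}--\ref{clm-3}) yields the bound, and verifying the resulting numerical inequality by hand. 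A secondary point to get right is that the independent-set extraction and the component-size bound must be phrased so that ``bad'' depends only on a $O(1)$-radius neighborhood; using the parent-orientation of the tree is what keeps that radius --- and hence the exponent in $\Delta^{O(s)}$ --- small.
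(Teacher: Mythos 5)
Your proposal follows the same ``shatter--then--finish'' template but differs substantially from the paper's proof in the crucial first phase, and the substitution you propose does not close the key quantitative gap. The paper's Phase~1 is \emph{not} a random color-bidding process: it is $\Delta-3$ rounds of \emph{deterministic} MIS peeling. In round $i$ (for $i=\Delta$ down to $4$) each vertex draws a random real $x(v)\in[0,1]$, the set $K$ of local minima is extended to an arbitrary MIS $I\supseteq K$, and all of $I$ is colored $i$; since every surviving vertex has a neighbor in $I$, this \emph{deterministically} guarantees that every uncolored vertex ends Phase~1 with at most $3$ uncolored neighbors, so that $3$ reserved colors provably suffice. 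The random reals serve only as a tie-breaking device that Claim~\ref{clm-4} integrates out, giving $\Pr[v\in S]\le\prod_{i=4}^{\Delta}P_i<(1-\epsilon)/(4\Delta^3)$ --- a single numeric inequality that closes exactly at $\Delta=55$.

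The central gap in your plan is the claimed survival probability. You propose $O(1)$ passes of uniform color bidding, each succeeding with a constant probability bounded away from $0$. But then after $O(1)$ passes the per-vertex failure probability is $e^{-O(1)}$, a fixed constant --- nowhere near the $\exp(-\Omega(\Delta))$ you assert, and far too large to beat the $4\cdot n\cdot\Delta^{k(t-1)}$ branching factor of Lemma~\ref{lem:shatter} (for $\Delta=55$ and $k=3$ you would already need $q\lesssim 10^{-6}$). Getting $\exp(-\mathrm{poly}(\Delta))$ out of color bidding requires exactly the $O(\log^\ast\Delta)$-round iterated-Chernoff escalation of Claims~\ref{clm-1}--\ref{clm-3}, which the paper explicitly identifies as unworkable at small $\Delta$ --- that is the entire reason a different Phase~1 was designed. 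Two secondary issues: (i) if the ``parent orientation'' you invoke is meant algorithmically, observe that rooting a tree takes $\Theta(\mathrm{diameter})$ rounds in $\LOCAL$ and cannot be done in the allotted time (in the paper the rooting appears only as an analysis device inside Claim~\ref{clm-4}); and (ii) your Phase~2 applies Barenboim--Elkin to every uncolored component, whereas the paper shatters only the set $S$ of uncolored vertices with exactly $3$ uncolored neighbors and handles the remaining (possibly large) components of lower uncolored-degree by a separate greedy Phase~3, a split that is forced because the probability bound in Claim~\ref{clm-4} is specifically a bound on $\Pr[v\in S]$ and not on $\Pr[v\text{ remains uncolored}]$.
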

\begin{proof}
We assume that $\Delta = O(1)$ is constant.  If it is sufficiently large, apply Theorem~\ref{thm:up-rnd}.
Our algorithm has three phases:

\paragraph{Phase 1.} We execute the following procedure to partially color the graph with colors in $\{4, 5, \ldots, \Delta\}$.

\begin{framed}
\begin{enumerate}
\item[] Initially $U \leftarrow V$.
\item[] For $i$ from $\Delta$ downto $4$, do the following steps in parallel for each vertex $v \in U$:
\item \label{s1} Choose a real number $x(v) \in [0,1]$ uniformly at random.
\item \label{s2} Let $\displaystyle K = \left\{v \;|\; x(v) < \min_{u \in N(v) \cap U} x(u)\right\}$ be the set of all vertices holding local minima.
\item \label{s3} Find any MIS $I \supseteq K$ of $U$. All vertices in $I$ are colored $i$.
\item \label{s4} Set $U\leftarrow U \setminus I$  (remove all colored vertices).
\end{enumerate}
\end{framed}

The above procedure ensures that  the number of uncolored neighbors of a vertex $v \in U$ is at most $i-1$ after step~\ref{s4}.
Therefore, at the end of the Phase 1, we have $|N(v) \cap U| \leq 3$ for any uncolored vertex $v$.

The MIS required in Step~\ref{s3} can be computed in $O(\Delta + \log^\ast n) = O(\log^\ast n)$ time \cite{BarenboimEK14}, or in $O(\Delta^2 + \log^* n) = O(\log^* n)$ time via Theorem~\ref{cor:linial-coloring}.

\paragraph{Phase 2.} We will later show that the set of the vertices $S = \{v \in U \text{ s.t. } |N(v) \cap U| = 3\}$ form connected components of size at most $O(\log n)$ with probability $\geq 1 - n^{-c}$. Hence we  apply Theorem~\ref{thm:up-det} to 3-color the set $S$ (using the colors $1,2,3$) in $O(\log \log n)$ time. We then update $U = U \setminus S$ after coloring the vertices in $S$.

\paragraph{Phase 3.} For each vertex $v$ that remains uncolored, the number of its available colors (i.e. $\{1, \ldots, \Delta\} \setminus \{\text{color}(u) \;|\; u \in N(v) \text{ is colored}\}$) is greater than the number of its uncolored neighbors (i.e. $|N(v) \cap U|$). We apply an $O(\log^\ast n)$-time MIS algorithm twice to get a 3-coloring of vertices in $U$ (with three colors $1',2',3'$). For $i = 1', 2', 3'$, we recolor the vertices that are colored with $i'$ using any of its available colors.

\medskip

In view of the above, to prove the theorem, it suffices to show that the set  $S = \{v \in U \text{ s.t. }  |N(v) \cap U| = 3\}$ (which is defined in Phase 2) form connected components of size at most $O(\log n)$ with probability $\geq 1 - n^{-c}$.

Given any distant-$3$ set $T$ of size $t$, we select any vertex in $V \setminus T$ as a root to make the tree rooted, and for each $v_i \in T$, we define $w_i$ as the parent of $v_i$. We also define $D_i = \bigcup_{u \in N(v_i) \setminus \{w_i\}} N(u)$, for each $v_i \in T$. We observe that $\forall v_i, v_j \in T, D_i \cap D_j = \emptyset$. We prove the the following claim (The proof is given in the Appendix):
\begin{claim}\label{clm-4}
There is some constant $\epsilon$ such that the probability for a vertex $v_i$ to be in $S$ conditioned on arbitrary behavior of vertices not in $D_i$ is at most $\frac{1 - \epsilon}{4\Delta^{3}}$.
\end{claim}
Thus, the probability that all vertices in $T$ belong to $S$ is at most $\left(\frac{1 - \epsilon}{4\Delta^{3}}\right)^t$.
By Lemma~\ref{lem:shatter}, as long as
$$t \geq \frac{(c+1) \log n  - \log (\Delta^3)}{\log \frac{1}{1-\epsilon}},$$
with probability $\geq 1 - n^{-c}$ we have:
$$4^t n \Delta^{3(t-1)} \cdot \left(\frac{1 - \epsilon}{4\Delta^{3}}\right)^t \leq n^{-c}.$$
In words, with high probability there is no distant-$3$ set of size $t$ whose vertices are all in $S$.

Since (for $\Delta \geq 2$) any connected subgraph with  number of vertices at least $ \Delta^2 t$ must contain a distant-$3$ set of size $t$, we conclude that, with high probability, $S$ forms connected components of size at most $O(\Delta^2\log n) = O(\log n)$.
\end{proof}

\bibliographystyle{plain}
\bibliography{references}

\begin{thebibliography}{10}

\bibitem{BarYehudaCS16}
R.~Bar-Yehuda, K.~Censor-Hillel, and G.~Schwartzman.
\newblock A distributed $(2+\epsilon)$-approximation for vertex cover in
  ${O}(\log{\Delta}/\epsilon \log\log{\Delta})$ rounds.
\newblock {\em CoRR}, abs/1602.03713, 2016.

\bibitem{Barenboim15}
L.~Barenboim.
\newblock Deterministic {$(\Delta + 1)$}-coloring in sublinear (in {$\Delta$})
  time in static, dynamic and faulty networks.
\newblock In {\em Proceedings of the 2015 {ACM} Symposium on Principles of
  Distributed Computing ({PODC})}, pages 345--354, 2015.

\bibitem{BarenboimE10}
L.~Barenboim and M.~Elkin.
\newblock Sublogarithmic distributed {MIS} algorithm for sparse graphs using
  {N}ash-{W}illiams decomposition.
\newblock {\em Distributed Computing}, 22(5-6):363--379, 2010.

\bibitem{BarenboimEK14}
L.~Barenboim, M.~Elkin, and F.~Kuhn.
\newblock Distributed {$(\Delta+1)$}-coloring in linear (in {$\Delta$}) time.
\newblock {\em SIAM J.~Comput.}, 43(1):72--95, 2014.

\bibitem{BEPS16}
L.~Barenboim, M.~Elkin, S.~Pettie, and J.~Schneider.
\newblock The locality of distributed symmetry breaking.
\newblock {\em J.~ACM}, 2016.
\newblock to appear.

\bibitem{Ben-Or83}
M.~Ben{-}Or.
\newblock Another advantage of free choice: Completely asynchronous agreement
  protocols.
\newblock In {\em Proceedings of the 2nd {ACM} {SIGACT-SIGOPS} Symposium on
  Principles of Distributed Computing (PODC)}, pages 27--30, 1983.

\bibitem{BishtKP14}
T.~Bisht, K.~Kothapalli, and S.~V. Pemmaraju.
\newblock Brief announcement: {S}uper-fast $t$-ruling sets.
\newblock In {\em Proceedings 33rd ACM Symposium on Principles of Distributed
  Computing (PODC)}, pages 379--381, 2014.

\bibitem{Bollobas78b}
B.~Bollob\'{a}s.
\newblock Chromatic number, girth and maximal degree.
\newblock {\em Discrete Mathematics}, 24(3):311--314, 1978.

\bibitem{Bollobas78}
B.~Bollob{\'a}s.
\newblock {\em Extremal graph theory}, volume~11 of {\em London Mathematical
  Society Monographs}.
\newblock Academic Press Inc. [Harcourt Brace Jovanovich Publishers], London,
  1978.

\bibitem{Bracha84}
G.~Bracha.
\newblock An asynchronous $(n-1)/3$-resilient consensus protocol.
\newblock In {\em Proceedings of the 3rd {ACM} Symposium on Principles of
  Distributed Computing (PODC)}, pages 154--162, 1984.

\bibitem{BrandtEtal16}
S.~Brandt, O.~Fischer, J.~Hirvonen, B.~Keller, T.~Lempi\"{a}inen, J.~Rybicki,
  J.~Suomela, and J.~Uitto.
\newblock A lower bound for the distributed {L}ov\'{a}sz local lemma.
\newblock In {\em Proceedings 48th ACM Symposium on the Theory of Computing},
  page~?, 2016.

\bibitem{ChungPS14}
K.-M. Chung, S.~Pettie, and H.-H. Su.
\newblock Distributed algorithms for the {L}ov\'{a}sz local lemma and graph
  coloring.
\newblock In {\em Proceedings 33rd ACM Symposium on Principles of Distributed
  Computing (PODC)}, pages 134--143, 2014.

\bibitem{Dahan14}
X.~Dahan.
\newblock Regular graphs of large girth and arbitrary degree.
\newblock {\em Combinatorica}, 34(4):407--426, 2014.

\bibitem{ElkinPS15}
M.~Elkin, S.~Pettie, and H.~H. Su.
\newblock {$(2\Delta-1)$}-edge coloring is much easier than maximal matching in
  the distributed setting.
\newblock In {\em Proceedings 26th ACM-SIAM Symposium on Discrete Algorithms
  (SODA)}, pages 355--370, 2015.

\bibitem{FischerLP85}
M.~J. Fischer, N.~A. Lynch, and M.~Paterson.
\newblock Impossibility of distributed consensus with one faulty process.
\newblock {\em J.~ACM}, 32(2):374--382, 1985.

\bibitem{Ghaffari16}
M.~Ghaffari.
\newblock An improved distributed algorithm for maximal independent set.
\newblock In {\em Proceedings 27th Annual {ACM-SIAM} Symposium on Discrete
  Algorithms (SODA)}, pages 270--277, 2016.

\bibitem{HanckowiakKP01}
M.~Ha{\'n}{\'c}kowiak, M.~Karo{\'n}ski, and A.~Panconesi.
\newblock On the distributed complexity of computing maximal matchings.
\newblock {\em SIAM J.~Discrete Mathematics}, 15(1):41--57 (electronic), 2001.

\bibitem{HarrisSS16}
D.~Harris, J.~Schneider, and H.-H. Su.
\newblock Distributed {$(\Delta+1)$}-coloring in sublogarithmic rounds.
\newblock In {\em Proceedings 48th ACM Symposium on Theory of Computing
  (STOC)}, 2016.

\bibitem{KingS13}
V.~King and J.~Saia.
\newblock Byzantine agreement in polynomial expected time.
\newblock In {\em Proceedings of the 45th {ACM} Symposium on Theory of
  Computing (STOC)}, pages 401--410, 2013.

\bibitem{KormanSV13}
A.~Korman, J.-S. Sereni, and L.~Viennot.
\newblock Toward more localized local algorithms: removing assumptions
  concerning global knowledge.
\newblock {\em Distributed Computing}, 26(5--6):289--308, 2013.

\bibitem{KothapalliP12}
K.~Kothapalli and S.~V. Pemmaraju.
\newblock Super-fast $3$-ruling sets.
\newblock In {\em Proceedings IARCS Conference on Foundations of Software
  Technology and Theoretical Computer Science (FSTTCS)}, volume~18 of {\em
  LIPIcs}, pages 136--147. Schloss Dagstuhl - Leibniz-Zentrum f\"{u}r
  Informatik, 2012.

\bibitem{KuhnMW04}
F.~Kuhn, T.~Moscibroda, and R.~Wattenhofer.
\newblock What cannot be computed locally!
\newblock In {\em Proceedings 23rd Annual {ACM} Symposium on Principles of
  Distributed Computing (PODC)}, pages 300--309, 2004.

\bibitem{KuhnMW10}
F.~Kuhn, T.~Moscibroda, and R.~Wattenhofer.
\newblock Local computation: Lower and upper bounds.
\newblock {\em CoRR}, abs/1011.5470, 2010.

\bibitem{Linial92}
N.~Linial.
\newblock Locality in distributed graph algorithms.
\newblock {\em SIAM J.~Comput.}, 21(1):193--201, 1992.

\bibitem{Naor91}
M.~Naor.
\newblock A lower bound on probabilistic algorithms for distributive ring
  coloring.
\newblock {\em SIAM J.~Discrete Mathematics}, 4(3):409--412, 1991.

\bibitem{NaorS95}
M.~Naor and L.~J. Stockmeyer.
\newblock What can be computed locally?
\newblock {\em SIAM J.~Comput.}, 24(6):1259--1277, 1995.

\bibitem{PanconesiR01}
A.~Panconesi and R.~Rizzi.
\newblock Some simple distributed algorithms for sparse networks.
\newblock {\em Distributed Computing}, 14(2):97--100, 2001.

\bibitem{PanconesiS96}
A.~Panconesi and A.~Srinivasan.
\newblock On the complexity of distributed network decomposition.
\newblock {\em J.~Algor.}, 20(2):356--374, 1996.

\bibitem{PettieS15}
S.~Pettie and H.-H. Su.
\newblock Distributed algorithms for coloring triangle-free graphs.
\newblock {\em Information and Computation}, 243:263--280, 2015.

\bibitem{Rabin83}
M.~O. Rabin.
\newblock Randomized {B}yzantine generals.
\newblock In {\em Proceedings of the 24th {IEEE} Symposium on Foundations of
  Computer Science (FOCS)}, pages 403--409, 1983.

\bibitem{SchneiderW10-J}
J.~Schneider and R.~Wattenhofer.
\newblock An optimal maximal independent set algorithm for bounded-independence
  graphs.
\newblock {\em Distributed Computing}, 22(5-6):349--361, 2010.

\bibitem{Sipser12}
M.~Sipser.
\newblock {\em Introduction to the Theory of Computation}.
\newblock International Thomson Publishing, 3rd edition, 2012.

\end{thebibliography}

\newpage

\appendix
\begin{Large}
\noindent {\bf Appendix} \\
\end{Large}

\section{Proofs of Claims~\ref{clm-1},~\ref{clm-2}, and~\ref{clm-3}}\label{sect:mmm}

The proofs for these claims rely on the Large Palette Property $\mathcal{P}_1(v)$ and the Small Degree Property $\mathcal{P}_2(v)$. First, we observe that our filtering rules imply that these two properties hold after each round:
\begin{itemize}
\item The filtering rule for $i=1$ guarantees that the Large Palette Property $\mathcal{P}_1(v)$ is met for all vertices that remain after the filtering. Notice that $|\Psi_2(v)| - |N_2'(v)| \geq \frac{\Delta}{200}$ implies $|\Psi_i(v)| \geq \frac{\Delta}{200}$, for all $i$.
\item The filtering rules for $i=1$ and $1 < i < t$ ensures that the Small Degree Property $\mathcal{P}_2(v)$ holds for all $i$, since $|N_{i+1}'(v)| \leq \frac{\Delta}{c_{i+1}}$ implies that $|N_{i+1}(v)| \leq \frac{\Delta}{c_{i+1}}$.
\end{itemize}

\begin{proof}[Proof of Claim~\ref{clm-1}]
Recall that a vertex $v$ is marked as bad in round 1 only if $|\Psi_2(v)| - |N_2'(v)| < \frac{\Delta}{200}$.
Thus, we assume $|N(v)|\ge \frac{199\cdot \Delta}{200}$, since otherwise $v$ will not be marked as bad.

For each neighbor $u$ of $v$, let $E_u$ denote the event of $u$ being colored in the first round. Since the graph is a tree, for all $u \in N(v)$ the events $E_u$ are independent. Assuming sufficiently large $\Delta$, we have:
$$\mathrm{Pr}[E_u] \geq \left(1-\frac{1}{\Delta- \sqrt{\Delta}}\right)^{|N(u)|} \geq \left(1-\frac{1}{\Delta - \sqrt{\Delta}}\right)^\Delta \geq \frac{1}{3}.$$

By a Chernoff bound (with $\delta = \frac{79}{199}$, and the expected number of colored neighbors being at least $\frac{199}{200} \cdot \frac{\Delta}{3}$), the number of colored neighbors of $v$ in the first round is at least $\frac{\Delta}{5} = (1-\delta) \cdot  \frac{199}{200} \cdot \frac{\Delta}{3}$  with probability at least
$$1-\exp\left(\frac{-\left(\frac{79}{199}\right)^2\cdot \left( \frac{199}{200} \cdot \frac{\Delta}{3} \right)}{2}\right).$$

Let $S=\{u_1, u_2, \ldots\}$ be the colored subset of $N(v)$. In what follows, we assume that $|S|\geq \frac{\Delta}{5}$. Conditioned on the color selected by $v$ and the event that $S$ is the colored subset of $N(v)$, each $u_j$ independently selects a color uniformly at random from $\{1,\ldots, \Delta - \sqrt{\Delta}\} \setminus \{\mathrm{Color}(v)\}$.

If $\frac{\Delta}{10} - |\bigcup_{j=1}^{\Delta/10} \{ \mathrm{Color}(u_j)\}| \geq \frac{\Delta}{200}$, then $|\Psi_2(v)| - |N_2(v)| \geq \frac{\Delta}{200}$, and $v$ is not marked as bad. Otherwise,  each $u_j$, $j > \frac{\Delta}{10}$, chooses a color that is already chosen by some $u_k$, $k<j$, with probability at least
$$\frac{\frac{\Delta}{10} - \frac{\Delta}{200}}{\Delta - \sqrt{\Delta} - 1} \geq \frac{1}{11}.$$

As a result, the expected value of $|S| - |\bigcup_{j=1}^{|S|} \{\mathrm{Color}(u_j)\}|$ is at least $\frac{\Delta}{10} \cdot \frac{1}{11} = \frac{\Delta}{110}$. By a Chernoff bound (with $\delta = \frac{9}{20}$), this value is at least $\frac{\Delta}{200} = (1 -\delta) \cdot \frac{\Delta}{110}$ with  probability at least
$$1-\exp\left(\frac{-\left(\frac{9}{20}\right)^2 \cdot \frac{\Delta}{110}}{2}\right) = 1 - \exp\left(\frac{-162\Delta}{11}\right).$$

By definition, $|\Psi_2(v)| = |\Psi_1(v)| - |\bigcup_{j=1}^{|S|} \{\mathrm{Color}(u_i)\}|$, and $|N_2'(v)| = |N_1(v) \setminus S| =|N_1(v)| - |S|$.
In view of the above, $|\Psi_2(v)| - |N_2(v)| \geq \frac{\Delta}{200}$ (and $v$ is not marked as bad) with  probability at least $$1 - \exp\left(\frac{-\left(\frac{79}{199}\right)^2\cdot \left( \frac{199}{200} \cdot \frac{\Delta}{3} \right)}{2}\right)  - \exp\left(\frac{-162\Delta}{11}\right) = 1 - \exp(-\Omega(\Delta)).$$
\end{proof}

\begin{proof}[Proof of Claim~\ref{clm-2}]
Our goal is to show that $|N_{i+1}'(v)| \leq \frac{\Delta}{c_i \cdot \exp(\frac{c_{i}}{3\cdot 200 \cdot e^{200}})} = \frac{\Delta}{c_{i+1}}$ with high probability.

We include each available color in $S_v$ independently with probability $\frac{c_i}{|\Psi_i(u)|}$, so the expected value of $|S_v|$ is $c_i$. By a Chernoff bound, the event of $|S_v|\leq \frac{\Delta}{2\cdot 200} = (1+\delta)c_i$, where $\delta = \left(\frac{\Delta}{2\cdot 200 \cdot c_i} - 1\right)$, happens with  probability at least:
$$1 - \exp\left(\frac{-\left(\frac{\Delta}{2\cdot 200 \cdot c_i} - 1\right) \cdot c_i}{2}\right) \geq 1 - \exp\left(\frac{-\Delta}{5\cdot 200}\right).$$

For any $u \in N_i(v)$, a color in $\Psi_i(u) \setminus S_v$ belongs to $S_{u} \setminus \bigcup_{w \in N_i(u)} S_w$ with probability at least
$$\frac{c_i}{|\Psi_i(u)|}  \cdot
\prod_{w \in N_i(u) \setminus \{v\}}\left(1-\frac{c_i}{|\Psi_i(w)|}\right)
\geq
\frac{c_i}{|\Psi_i(u)|} \cdot
\left(1 - \frac{c_i \cdot 200}{\Delta}\right)^{\frac{\Delta}{c_i} - 1}
\geq
\frac{c_i }{ e^{200} |\Psi_i(u)|}.$$
Notice that the term $\frac{c_i}{|\Psi_i(u)|}$ is the probability that a color in $\Psi_i(u) \setminus S_v$ is chosen to be in $S_{u}$, and the term $\prod_{w \in N_i(u) \setminus \{v\}}\left(1-\frac{c_i}{|\Psi_i(w)|}\right)$ is the probability that a color in $\Psi_i(u) \setminus S_v$ does not belong to $\bigcup_{w \in N_i(u)} S_w$.

Under the condition that $|S_v|\leq \frac{\Delta}{2\cdot 200 }$, we have $|\Psi_i(u) \setminus S_v| \geq \frac{\Delta}{200} -  \frac{\Delta}{2 \cdot 200} = \frac{\Delta}{2 \cdot 200}$. Then the set $S_{u} \setminus \bigcup_{w \in N_i(u)} S_w$ is empty with probability at most:
 $$\left(1 - \frac{c_i }{e^{200} |\Psi_i(u)|}\right)^{\frac{\Delta}{2 \cdot 200}}  \leq
\left(1 - \frac{c_i }{ e^{200} {\Delta} }\right)^{\frac{\Delta}{2 \cdot 200}}
\leq \exp\left(-\frac{c_i}{2\cdot 200\cdot e^{200}}\right).$$
Hence  $u \in N_i(v)$ remains uncolored with probability at most $\exp(-\frac{c_i}{2\cdot 200\cdot e^{200}})$.

\medskip

\noindent {\bf [Case 1. $\exp(-\frac{c_i}{2\cdot 200\cdot e^{200}}) \cdot |N_i(v)| \geq \Delta^{0.1}$]} We choose $\delta$ to be the number such that $(1+\delta) \exp(-\frac{c_i}{2\cdot 200\cdot e^{200}}) = -\frac{c_i}{3\cdot 200\cdot e^{200}}$. Notice that we always have $\delta \geq \exp(\frac{1}{6\cdot 200\cdot e^{200}}) - 1$, which is a positive constant. By a Chernoff bound on all vertices in $N_i(v)$, we have $|N_{i+1}'(v)| \leq  \exp(-\frac{c_i}{3\cdot 200\cdot e^{200}}) \cdot |N_i(v)|$ with probability at least
$$1-\exp\left(\frac{-\max\{1, \delta^2\}\cdot{\exp(-\frac{c_i}{2\cdot 200\cdot e^{200}}) \cdot |N_i(v)|}}{3}\right) \geq 1 - \exp(-\Omega(\Delta^{0.1})).$$

\medskip

\noindent {\bf [Case 2.  $\exp(-\frac{c_i}{2\cdot 200\cdot e^{200}}) \cdot |N_i(v)| < \Delta^{0.1}$]} By a Chernoff bound on all vertices in $N_i(v)$, we have $|N_{i+1}'(v)| \leq \Delta^{0.8} \cdot \exp(-\frac{c_i}{2\cdot 200\cdot e^{200}}) \cdot |N_i(v)| \leq \Delta^{0.9}$ with probability at least
$$1-\exp\left(\frac{(\Delta^{0.8} - 1) \cdot \exp(-\frac{c_i}{2\cdot 200\cdot e^{200}}) \cdot |N_i(v)|}{3}\right) \geq 1-\exp\left(\frac{-\Delta^{0.9} + \Delta^{0.1}}{3}\right).$$

In any case, we have $|N_{i+1}'(v)| \leq \max \left\{\frac{|N_i(v)|}{\exp\left({\frac{c_i}{3\cdot 200\cdot e^{200}}}\right)}, \Delta^{0.9}\right\} = \frac{\Delta}{c_{i+1}}$ (and so $v$ is not marked as bad) with probability at least  $1-\exp(-\Omega(\Delta^{0.1}))$.
\end{proof}

\begin{proof}[Proof of Claim~\ref{clm-3}]
A color in $\Psi_i(v)$ belongs to $S_{v} \setminus \bigcup_{u \in N_i(v)} S_u$ with probability at least
$$\frac{c_i}{|\Psi_i(v)|} \cdot \prod_{u \in N_i(v) }\left(1-\frac{c_i}{|\Psi_i(u)|}\right)
\geq \frac{c_i}{|\Psi_i(v)|} \cdot \left(1 - \frac{c_i 200}{\Delta}\right)^{\frac{\Delta}{c_i}}
\geq \frac{c_i}{1.1  e^{200} |\Psi_i(v)|}.$$

Therefore, $v$ remains uncolored (and is marked as bad) with probability at most
$$\left(1 - \frac{c_i }{1.1 e^{200} |\Psi_i(v)|}\right)^{|\Psi_i(v)|} \leq  \exp\paren{-\frac{ \Delta^{0.1}}{1.1 e^{200}}}.$$
\end{proof}

\section{Proof of Claim~\ref{clm-4}}
\begin{proof}[Proof of Claim~\ref{clm-4}]
For notational simplicity, we write $v \bydef v_i$ and $w \bydef w_i$.

We observe that $v \in S$ only when for all $i=\Delta$ to $4$, at most one neighbor of $v$ is colored with $i$ in Step~\ref{s2}. In addition, in the beginning of Step~\ref{s1}, we must have $|N(v) \cap U| = i$, for all $i=\Delta$ to $4$.

Now, assume that we are at the beginning of Step~\ref{s1}, the vertex $v$ still has no neighbors of repeated colors, and $|N(v) \cap U| = i$. The probability that a neighbor $u \in N(v) \setminus \{w\}$  is colored $i$ in Step~\ref{s2} when $x(v) = z \in [0,1]$ is at least:
$$\int_{y=0}^{y=z} \mathrm{Pr}[\forall r \in (N(u) \cap U) \setminus \{v\}, x(r) \leq y] dy \geq \int_{y=0}^{y=z} (1-y)^{i-1} dy = \frac{1-(1-z)^i}{i}.$$

Note that the variable $y$ represents the random variable $x(u)$, and $u$ is colored when (i) $y \in [0,z)$, and (ii) $x(r) \in (y,1]$ for all $r \in (N(u) \cap U) \setminus \{v\}$. Also note that $|(N(u) \cap U) \setminus \{v\}| \leq i-1$.

We write $p_i(z) = \frac{1-(1-z)^i}{i}$. Then the probability that  at most one neighbor of $v$ is colored with $i$ in Step~\ref{s2} can be upper bounded by:
\begin{align*}
P_{i} &= \int_{x=0}^{x=1} \mathrm{Pr}[\mathrm{binom}(|(N(v) \cap U) \setminus \{w\}|,p_i(x))\leq 1] dx\\
	&\leq \int_{x=0}^{x=1}\left[(1-p_i(x))^{i-1} + (i-1)p_i(x)(1-p_i(x))^{i-2}\right]dx.
\end{align*}
Notice that $|(N(v) \cap U) \setminus \{w\}| \geq i-1$.

When $i=4$, the $P_4$ is about $0.88718$.  As $i$ increases, $P_i$ decreases monotonically, approaching $0.73576$.
By a numerical calculation, so long as $\Delta \geq 55$, the probability that $v$ is in $S$ conditioned on arbitrary behavior of vertices not in $\bigcup_{u \in N(v) \setminus \{w\}} N(u)$ is at most
$$\prod_{i=4}^{i=\Delta} P_i < \frac{1}{4\Delta^{3}},$$
as desired.
\end{proof}

\end{document}